\documentclass[10pt]{llncs}
\usepackage[T1]{fontenc}
\usepackage[latin9]{inputenc}
\usepackage{color}
\usepackage{float}
\usepackage{amsmath}
\usepackage{amssymb}

\makeatletter

\floatstyle{ruled}
\newfloat{algorithm}{tbp}{loa}
\floatname{algorithm}{Algorithm}


\usepackage{llncsdoc}\@ifundefined{definecolor}
 {\usepackage{color}}{}
\usepackage{float}\usepackage{amsfonts}\usepackage{multicol}
\usepackage{fancyhdr}

\newtheorem{rem}{Remark}

\makeatother
\fancypagestyle{plain}{%
\fancyhf{} 
\fancyfoot[L]{Preprint to appear in COCOON 2011} 

}

\begin{document}

\title{Unconstrained and Constrained Fault-Tolerant Resource Allocation%
\thanks{This work was partially supported by Australian Research Council Discovery
Project grant \#DP0985063.%
}}

\author{Kewen Liao \and Hong Shen\\
}

\institute{\institute{School of Computer Science \\ The University of Adelaide, SA 5005, Australia \\ \email{\{kewen, hong\}@cs.adelaide.edu.au}}}
\maketitle
\begin{abstract}
First, we study the Unconstrained Fault-Tolerant Resource Allocation
(UFTRA) problem (a.k.a. FTFA problem in \cite{shihongftfa}). In the
problem, we are given a set of sites equipped with an unconstrained
number of facilities as resources, and a set of clients with set $\mathcal{R}$
as corresponding connection requirements, where every facility belonging
to the same site has an identical opening (operating) cost and every
client-facility pair has a connection cost. The objective is to allocate
facilities from sites to satisfy $\mathcal{R}$ at a minimum total
cost. Next, we introduce the Constrained Fault-Tolerant Resource Allocation
(CFTRA) problem. It differs from UFTRA in that the number of resources
available at each site $i$ is limited by $R_{i}$. Both problems
are practical extensions of the classical Fault-Tolerant Facility
Location (FTFL) problem \cite{Jain00FTFL}. For instance, their solutions
provide optimal resource allocation (w.r.t. enterprises) and leasing
(w.r.t. clients) strategies for the contemporary cloud platforms. 

In this paper, we consider the metric version of the problems. For
UFTRA with uniform $\mathcal{R}$, we present a star-greedy algorithm.
The algorithm achieves the approximation ratio of 1.5186 after combining
with the cost scaling and greedy augmentation techniques similar to
\cite{Charikar051.7281.853,Mahdian021.52}, which significantly improves
the result of \cite{shihongftfa} using a phase-greedy algorithm.
We also study the capacitated extension of UFTRA and give a factor
of 2.89. For CFTRA with uniform $\mathcal{R}$, we slightly modify
the algorithm to achieve 1.5186-approximation. For a more general
version of CFTRA, we show that it is reducible to FTFL using linear
programming. 
\end{abstract}
\markboth{Kewen Liao and Hong Shen}{Unconstrained and Constrained
Fault-Tolerant Resource Allocation}\thispagestyle{plain} \lfoot{footer text}

\section{Introduction}

Many facility location models are built around the classical Uncapacitated
Facility Location (UFL) problem. Since late 1990s, UFL has been studied
extensively from the perspective of approximation algorithms for its
practical use in clustering applications and network optimization
\cite{jain01approximation}. With the practical focus on the later,
the Fault-Tolerant Facility Location (FTFL) model was proposed in
\cite{Jain00FTFL}, which adds tolerance to UFL against connection/communication
failures between facilities and clients. As our computer world evolves,
optimal resource allocation starts to become essential in many contemporary
applications \cite{chang2010optimal}. The typical example is today's
cloud computing platform. In particular, cloud providers own data
centers equipped with physical machines as resources. Each physical
machine runs a few virtual machines to serve remote clients' leases
for computing power. This new resource sharing paradigm inevitably
brings up two optimization problems from different perspectives: 1)
how do \textit{cloud providers} cost-effectively share their resources
among a larger number of closer clients? 2) how can \textit{clients}
optimally rent machines in clouds that incurs minimum cost? Through
a careful look at these questions, it is interesting to see they may
be interpreted as facility location problems where facility costs
w.r.t. providers are physical machine costs, whereas to clients, they
are administration overheads; Connection costs w.r.t. providers are
distances and they are renting rates to clients. However, existing
models like UFL and FTFL are insufficient to capture the resource
allocation scenario. They both restrict at most one facility/resource
to open at each site/data center, and connection requests/leases of
a client must be served by facilities from different sites. This motivates
us to study the Unconstrained Fault-Tolerant Resource Allocation (UFTRA)
and the Constrained Fault-Tolerant Resource Allocation (CFTRA) models
generalized from FTFL. Both of them allow multiple facilities to open
at every site and a client's connection requests to be served by facilities
within the same site. This is more realistic since a client may access
multiple machines on the same site in parallel. The difference is
the former model does not limit the number of resources to use at
each site (so provider can always add them) while the later constrains
the amount of available resources to allocate. 

\textbf{Related Work: }For UFL, due to its combinatorial structure,
best approximation algorithms are based on primal-dual and LP-rounding
techniques. In the stream of primal-dual algorithms, JV \cite{jain01approximation},
MMS \cite{Mohammad1.861} and JMS \cite{Jain02greedy} algorithms
are significant and well known. They achieved approximation ratios
of 3, 1.861 and 1.61 respectively. Charikar and Guha \cite{Charikar051.7281.853}
improved the result of JV algorithm to 1.853 and Mahdian et al. \cite{Mohammad06FLP}
improved JMS algorithm to 1.52-approximation. Both of these improvements
were made using the cost scaling and greedy augmentation techniques.
The first LP-rounding algorithm for UFL is proposed by Shmoys et al.
\cite{Shmoys97FL}. It achieved a ratio of 3.16 based on the filtering
and rounding technique of Lin and Vitter \cite{Lin92filting}. Over
the past decade, this result has been improved progressively until
very recently Charikar and Li \cite{charikar2011novel} gave the current
best ratio of 1.488. In contrast to UFL, primal-dual methods suffered
from the bottleneck \cite{Jain00FTFL} for the general non-uniform
FTFL. Constant results were only for the uniform case. In particular,
Jain et al. \cite{Jain03dualfitting} showed that their MMS and JMS
algorithms can be adapted to FTFL with the same approximation factors
of 1.861 and 1.61. Swamy and Shmoys \cite{Swamy08FTFL2.076} improved
this to 1.52. On the other hand, based on LP-rounding Guha et al.
\cite{Guha01FTFL2.47,Guha03FTFL2.41} obtained the first constant
factor algorithm for the non-uniform FTFL with ratio 2.408. Later,
it was improved to 2.076 by Swamy and Shmoys \cite{Swamy08FTFL2.076}.
Recently, Byrka et al. \cite{JaroslawFTFL1.725} achieved the current
best ratio of 1.7245 using dependent rounding. 

Guha and Khuller \cite{Guha99greedy} proved a lower bound of 1.463
for UFL holds unless $P=NP$ \cite{chudak2005improved}. The ratio
also bounds FTFL and the problems we study here. UFTRA is also known
as the fault-tolerant facility allocation (FTFA) problem in the context
of \cite{shihongftfa}. Xu and Shen used a phase-greedy algorithm
to obtain approximation ratio of 1.861. Recently, Yan and Chrobak
\cite{yan2011approximation} gave a rounding algorithm that achieved
3.16-approximation.

\textbf{Our Results:} We present a star-greedy algorithm for UFTRA
with uniform $\mathcal{R}$ that first achieves approximation factor
of 1.61. Our algorithm is motivated by JMS \cite{Jain03dualfitting}
and its adaptation for FTFL \cite{Swamy08FTFL2.076}. We also give
an equivalent primal-dual algorithm, and apply the dual fitting \cite{Jain03dualfitting}
and the inverse dual fitting \cite{shihongftfa} techniques for its
ratio analysis. Together with the cost scaling and greedy augmentation
techniques similar to \cite{Charikar051.7281.853,Mahdian021.52},
the overall algorithm arrives at the current best ratio of 1.5186
which significantly improves the 1.861-approximation in \cite{shihongftfa}.
In addition, we study the capacitated version of UFTRA and provide
a factor of 2.89. For CFTRA, we show that with a slight modification,
our algorithm for UFTRA preserves the ratio of 1.5186 for its uniform
case. Finally, we use linear programming to formally prove that CFTRA
even with arbitrary facility costs existing on the same site is pseudo-polynomial
time reducible to FTFL.

\section{Unconstrained FTRA }

In UFTRA,\textcolor{black}{{} we are given a set of sites $\mathcal{F}$
and a set of clients $\mathcal{C}$, where $\left|\mathcal{F}\right|=n_{f}$
and $\left|\mathcal{C}\right|=n_{c}$. For convenience, let $n=n_{f}+n_{c}$.
At each site $i\in\mathcal{F}$, an unbounded number of facilities
with $f_{i}$ as costs can be opened. There is also a connection cost
$c_{ij}$ between each client $j\in\mathcal{C}$ and all facilities
at site $i$. The objective is to optimally allocate a certain number
facilities from each $i$ to serve every client $j$ with $r_{j}\in\mathcal{R}$
requests while minimizing the total cost incurred. UFTRA can be formulated
by the following integer linear program (ILP). }In the formulation,
\textcolor{black}{variable $y_{i}$ denotes in the solution the number
of facilities to open at site $i$, and $x_{ij}$ the number of connections
between the ($i$, $j$) pair. Compared to the FTFL problem} \cite{Jain00FTFL},
domains of these variables are relaxed to be non-negative rather than
0-1 integers and therefore UFTRA forms a relaxation of FTFL.

{\small \begin{equation}
\begin{array}{llc}
\mathrm{minimize} & \sum_{i\in\mathcal{F}}f_{i}y_{i}+\sum_{i\in\mathcal{F}}\sum_{j\in\mathcal{C}}c_{ij}x_{ij}\\
\mathrm{subject\, to} & \forall j\in\mathcal{C}:\,\sum_{i\in\mathcal{F}}x_{ij}\ge r_{j}\\
 & \forall i\in\mathcal{F},j\in\mathcal{C}:\, y_{i}-x_{ij}\geq0\\
 & \forall i\in\mathcal{F},j\in\mathcal{C}:\, x_{ij}\in\mathbb{Z}^{+}\\
 & \forall i\in\mathcal{F}:\, y_{i}\in\mathbb{Z}^{+}\end{array}\label{eq:ucftra-ilp}\end{equation}
}{\small \par}

\subsection{The Algorithms}

Xu and Shen's algorithm \cite{shihongftfa} for \textcolor{black}{UFTRA}
runs in phases. In each phase, clients that have not fulfilled requirements
get connected to one more facility. In contrast to their phase-greedy
approach, our algorithm iteratively picks the star with the least
average cost (the most cost-effectiveness) and at the same time optimizes
the overall connection cost. It terminates until all clients' connection
requirements are satisfied. In Algorithm 1, we incrementally build
the solution $y_{i}$'s and $x_{ij}$'s which are initially set to
$0$. Their values will then increase according to the star being
picked. We define set $\mathcal{U}$ includes all clients that have
not fulfilled their connection requirements. In order to ensure the
feasibility of the solution, two conditions need to be met while iteratively
choosing stars: 1) the previously opened and used facility at a site
will have zero opening cost in the next iteration; 2) validity of
stars, i.e., a star to be chosen only consists of a facility and clients
have not connected to it. For these conditions, we consider \textit{two
types} of facilities for every site: closed facilities with cost $f_{i}$
and already opened ones with no cost. Also, w.r.t. the closed facilities
of site $i$, we construct the set of clients to be chosen by $C1_{i}=\mathcal{\mathcal{U}}$.
Similarly for the previously opened facilities of site $i$, the target
clients are put into the set $C2_{i}=\left\{ j\in\mathcal{\mathcal{U}}\,|\, x_{ij}<y_{i}\right\} $.
Initially $\forall i\in\mathcal{F}:\, C1_{i}=\mathcal{C}$, $C2_{i}=\emptyset$.
Therefore, at every $i$ the star to be selected is either $\left\{ \left(i,\, C'\right)\,|\, i\in\mathcal{F},\, C'\subseteq C1_{i}\right\} $
with facility cost $f_{i}$ or $\left\{ \left(i,\, C'\right)\,|\, i\in\mathcal{F},\, C'\subseteq C2_{i}\right\} $
with facility cost $0$. In addition, since each client $j$ has at
least $r_{j}$ demands, we can treat them as virtual ports in $j$
with each of them to be assigned to a single facility. W.l.o.g., we
number the ports of $j$ from $1$ to $r_{j}$ and connect them in
ascending order. In every iteration of the algorithm, we use variable
$p_{j}$ to keep track of the port of client $j$ to be connected.
Initially\textcolor{black}{{} $\forall j\in\mathcal{C}:\, p_{j}=1$},
and obviously $\mathcal{U}=\left\{ j\in\mathcal{C}\,|\, p_{j}\leq r_{j}\right\} $.
Moreover, the optimization of the overall connection cost actually
happens when a closed facility is opened and some clients in $\mathcal{C}\backslash\mathcal{\mathcal{U}}$
switch their most expensive connections to the facility. In order
to capture this, we denote the port $q$ of $j$ as $j^{\left(q\right)}$
where $1\leq q\leq r_{j}$, and $\phi\left(j^{\left(q\right)}\right)$
as the site $j^{\left(q\right)}$ connects to. Therefore, the\textit{
combined greedy objective} for picking the most cost-effective star
in Algorithm 1 is defined as the minimum of {\footnotesize $\min_{i\in\mathcal{F},\, C'\subseteq C2_{i}}\frac{\sum_{j\in C'}c_{ij}}{\left|C'\right|}$}
and {\footnotesize $\min_{i\in\mathcal{F},\, C'\subseteq C1_{i}}\frac{f_{i}+\sum_{j\in C'}c_{ij}-\sum_{j\in\mathcal{C}\backslash\mathcal{\mathcal{U}}}\max\left(0,\,\max_{q}c_{\phi\left(j^{\left(q\right)}\right)j}-c_{ij}\right)}{\left|C'\right|}$}.

We restate Algorithm 1 as an equivalent primal-dual algorithm (Algorithm
2) for the sake of ratio analysis. In addition to the previous definitions,
each port $j^{\left(q\right)}$ of client $j$ is associated with
a dual variable $\alpha_{j}^{q}$, representing the total price paid
by port $j^{\left(q\right)}$. We also denote a time\textcolor{black}{{}
$t$}, \textcolor{black}{which increases monotonically} from 0.\textcolor{black}{{}
At any $t$, }we define the contribution of $j$ to site $i$ as\textcolor{black}{{}
\eqref{eq:contribution} and the event $j$ connects to $i$ happens
in two cases: 1) $j$ fully pays the connection cost of an open facility
at $i$ that it is not connected to; 2) total contribution to a closed
facility at $i$ fully pays its opening cost and $j$'s contribution
is positive.}{\footnotesize \begin{equation}
\begin{cases}
\max\left(0,\, t-c_{ij}\right) & \textrm{if}\, j\in\mathcal{\mathcal{U}}\\
\max\left(0,\,\max_{q}c_{\phi\left(j^{\left(q\right)}\right)j}-c_{ij}\right) & \textrm{if}\, j\in\mathcal{\mathcal{\mathcal{C}\backslash U}}\end{cases}\label{eq:contribution}\end{equation}
}{\footnotesize \par}
\begin{lemma}
Runtime complexity of the Primal-Dual Algorithm is $O\left(n^{3}\max_{j}r_{j}\right)$.\end{lemma}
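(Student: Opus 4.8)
The plan is to bound separately the number of iterations the algorithm performs and the cost of a single iteration, and then multiply. We may analyse Algorithm~2 through its equivalent combinatorial form (Algorithm~1): each iteration selects one star $(i,C')$ and connects every client of $C'$ to a facility of site $i$, advancing $p_j$ by one for each $j\in C'$; the re-routings of already-satisfied clients in $\mathcal{C}\setminus\mathcal{U}$ that are triggered when a closed facility opens take place inside that same iteration and do not connect any new port. Since $C'$ is always nonempty and each client $j$ has exactly $r_j$ ports to connect over the whole execution (once $p_j>r_j$ it leaves $\mathcal{U}$ for good), the total number of port-connection events, hence the number of iterations, is at most $\sum_{j\in\mathcal{C}}r_j\le n_c\max_j r_j=O(n\max_j r_j)$.

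Next I would show that each iteration runs in $O(n^2)$ time, where the work is dominated by evaluating the combined greedy objective. I would pre-sort, once at the start, for every site $i$ the clients in nondecreasing order of $c_{ij}$, at total cost $O(n^2\log n)$, which is absorbed into the final bound. For the zero-cost stars the minimiser of $\sum_{j\in C'}c_{ij}/|C'|$ over $C'\subseteq C2_i$ is found by a single scan of $C2_i$. For the cost-$f_i$ stars, the key observation is that the subtracted quantity $\sum_{j\in\mathcal{C}\setminus\mathcal{U}}\max(0,\max_q c_{\phi(j^{(q)})j}-c_{ij})$ does not depend on $C'$, so by the standard exchange argument the optimal $C'\subseteq C1_i$ is a prefix of the pre-sorted list; scanning these at most $n_c$ prefixes with running sums then solves site $i$ in $O(n_c)$ time, once the values $\max_q c_{\phi(j^{(q)})j}$ for the connected clients are available, which we refresh in $O(n_c)$ per iteration. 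Summing over the $n_f$ sites gives $O(n_f n_c)=O(n^2)$ per iteration, and updating all bookkeeping after the chosen star is fixed -- the sets $\mathcal{U},C1_i,C2_i$, the counters $p_j$, the assignment $\phi$, the per-client maxima, and the at most $n_c$ re-routings -- is $O(n^2)$ as well. Multiplying the two bounds yields $O(n\max_j r_j)\cdot O(n^2)=O(n^3\max_j r_j)$.

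The main obstacle is the second paragraph: keeping a single iteration at $O(n^2)$ requires (i) the prefix characterisation of the optimal client set of a star, which is precisely what avoids re-sorting or enumerating subsets, together with the remark that the $C'$-independent correction term does not interfere with it; and (ii) the care needed to verify that re-routing the expensive connections of satisfied clients neither spawns new iterations nor costs more than $O(n^2)$ per iteration, since these re-routings are exactly what distinguishes the resource-allocation setting from plain greedy facility location. If one prefers to analyse Algorithm~2 literally as a continuous sweep of $t$, the same bound follows: because $t$ is monotone, each pair $(i,j)$ contributes at most one ``edge becomes tight'' threshold event, so an event-driven simulation has $O(n^2)$ threshold events and $\sum_j r_j=O(n\max_j r_j)$ connection events, each processed in $O(n^2)$ time or less, which does not change the order of the running time.
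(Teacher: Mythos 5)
Your proof is correct and follows essentially the same decomposition as the paper's: at most $\sum_j r_j\le n_c\max_j r_j$ iterations (port-connection events), each processed in $O(n_f n_c)=O(n^2)$ time, giving $O(n^3\max_j r_j)$. You supply considerably more detail than the paper (the prefix characterisation of the optimal star and the handling of re-routings), but the underlying counting argument is identical.
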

\begin{proof}
Clients' reconnections dominate the time complexity. Once they happen
in Event 2, it takes time $O\left(n_{c}n_{f}\right)$ to update clients'
contributions to other facilities for computing anticipated time of
events. There are maximum $\sum_{j}r_{j}$ such events, therefore
total time is $O\left(\sum_{j}r_{j}n_{f}n_{c}\right)$, i.e. $O\left(n^{3}\max_{j}r_{j}\right)$.
\end{proof}
\begin{algorithm}[H]
{\footnotesize \caption{Star-Greedy Algorithm}
}{\footnotesize \par}

\textbf{Input}: $\forall i,\, j:\, f_{i}$, $c_{ij}$, $r_{j}$.

\textbf{Output}: $\forall i,\, j:\, y_{i}$, $x_{ij}$.

\textbf{Initialization:} Set $\mathcal{U}=\mathcal{C}$, $\forall i,\, j:\, y_{i}=0,\, x_{ij}=0,\, p_{j}=1$.

\medskip{}

While $\mathcal{U}\neq\mathcal{\emptyset}$:
\begin{enumerate}
\item Choose the optimal star $\left(i,\, C'\right)$ according to the combined
greedy objective.
\item If $\exists j\in C'$: $x_{ij}=y_{i}$, then set $y_{i}=y_{i}+1$.
\item $\forall j\in C':$ set $\phi\left(j^{\left(p_{j}\right)}\right)=i$
and $x_{ij}=x_{ij}+1$; $\forall j\in\mathcal{C}\backslash\mathcal{U}$
s.t. $\max_{q}c_{\phi\left(j^{\left(q\right)}\right)j}-c_{ij}>0:$
set $x_{\phi\left(j^{\left(q\right)}\right)j}=x_{\phi\left(j^{\left(q\right)}\right)j}-1$,
$x_{ij}=x_{ij}+1$ and $\phi\left(j^{\left(q\right)}\right)=i$.
\item $\forall j\in C'$ s.t. $p_{j}=r_{j}$: set \textbf{$\mathcal{U}=\mathcal{U}\backslash\left\{ j\right\} $},
otherwise set $p_{j}=p_{j}+1$.
\end{enumerate}
\end{algorithm}

\begin{algorithm}[H]
{\small \caption{Primal-Dual Algorithm}
}{\small \par}

\textbf{Input}: $\forall i,\, j:\, f_{i}$, $c_{ij}$, $r_{j}$.

\textbf{Output}: $\forall i,\, j:\, y_{i}$, $x_{ij}$.

\textbf{\textcolor{black}{Initialization}}\textcolor{black}{: }Set
$\mathcal{U}=\mathcal{C}$, $\forall i,\, j:\, y_{i}=0,\, x_{ij}=0,\, p_{j}=1$.

\medskip{}

\textcolor{black}{While $\mathcal{U}\neq\emptyset$, increase time
$t$ uniformly and execute the events below:} 
\begin{itemize}
\item \textcolor{black}{Event 1: $\exists i\in\mathcal{F},\, j\in\mathcal{U}$:
$t=c_{ij}$ and $x_{ij}<y_{i}$ }\\
 \textcolor{black}{Action 1: Set }$\phi\left(j^{\left(p_{j}\right)}\right)=i$,
$x_{ij}=x_{ij}+1$ and $\alpha_{j}^{p_{j}}=t$; If $p_{j}=r_{j}$,
then set\textbf{ $\mathcal{U}=\mathcal{U}\backslash\left\{ j\right\} $},
otherwise set $p_{j}=p_{j}+1$.\textcolor{black}{{} }\smallskip{}

\item \textcolor{black}{Event 2: $\exists i\in\mathcal{F}$: $\sum_{j\in\mathcal{U}}\max\left(0,\, t-c_{ij}\right)+$$\sum_{j\in\mathcal{C}\backslash\mathcal{U}}\max\left(0,\,\max_{q}c_{\phi\left(j^{\left(q\right)}\right)j}-c_{ij}\right)=f_{i}$}\\
 \textcolor{black}{Action 2: Set $y_{i}=y_{i}+1$}; $\forall j\in\mathcal{C}\backslash\mathcal{U}$
s.t. $\max_{q}c_{\phi\left(j^{\left(q\right)}\right)j}-c_{ij}>0:$
set $x_{\phi\left(j^{\left(q\right)}\right)j}=x_{\phi\left(j^{\left(q\right)}\right)j}-1$,
$x_{ij}=x_{ij}+1$ and $\phi\left(j^{\left(q\right)}\right)=i$; $\forall j\in\mathcal{\mathcal{U}}$
s.t. $t\geq c_{ij}:$ do Action 1.
\end{itemize}
\begin{rem} \textcolor{black}{If more than one event happen at time
$t$, the algorithm processes all of them in an arbitrary order.}
Also, the events themselves may repeatedly happen at any $t$ since
unbounded facilities are allowed to open.\end{rem}%
\end{algorithm}

\subsection{Analysis: Dual Fitting and Inverse Dual Fitting }

Before proceeding our analysis, for simplicity we consider to decompose
any solutions of ILP \eqref{eq:ucftra-ilp} into a collection of stars\textcolor{black}{{}
from set $\mathcal{S}=\left\{ \left(i,\,\mathcal{C}'\right)\,|\, i\in\mathcal{F},\,\mathcal{C}'\subseteq\mathcal{C}\right\} $}
and construct the equivalent ILP \eqref{eq:ucftra-star}. Note that
the star considered here consists of a site and a set of clients.
It is different from the definition in the greedy algorithm where
a star includes two types of facilities. However, this will not make
any difference because $C1_{i}$ and $C2_{i}$ can eventually combine
into a star belonging to $\mathcal{S}$. Moreover, we are allowed
to have duplicate stars in a solution. This directly implies multiple
identical facilities can be opened at every site. The variable $x_{s}$
in \eqref{eq:ucftra-star} denotes the number of duplicate star $s$.
Also, the cost of $s$ denoted by $c_{s}$ is equal to $f_{s}+\sum_{j\in s\cap\mathcal{C}}c_{sj}$.
Here we use $s$ to index the site in star $s$, therefore $f_{s}$
is the facility cost of site $s$ and $c_{sj}$ is the connection
cost between the site and client $j$.

\begin{eqnarray}
 & \mathrm{minimize} & {\displaystyle \sum_{s\in\mathcal{S}}c_{s}x_{s}}\nonumber \\
 & \mathrm{subject\, to} & \forall j\in\mathcal{C}:\,{\displaystyle \sum_{s:j\in s}x_{s}\geq r_{j}}\label{eq:ucftra-star}\\
 &  & \forall s\in\mathcal{S}:\, x_{s}\in\mathbb{Z}^{+}\nonumber \end{eqnarray}

Its LP-relaxation and dual LP are the following:

\setlength{\columnsep}{2pt}
\begin{multicols}{2}

\begin{eqnarray}
 & \mathrm{minimize} & {\displaystyle \sum_{s\in\mathcal{S}}c_{s}x_{s}}\nonumber \\
 & \mathrm{subject\, to} & \forall j\in\mathcal{C}:\,{\displaystyle \sum_{s:j\in s}x_{s}\geq r_{j}}\label{eq:ucftra-star-relax}\\
 &  & \forall s\in\mathcal{S}:\, x_{s}\geq0\nonumber \end{eqnarray}

\columnbreak 

\begin{eqnarray}
 & \textrm{maximize} & {\displaystyle \sum_{j\in\mathcal{C}}r_{j}\alpha_{j}}\nonumber \\
 & \mathrm{subject\, to} & \forall s\in\mathcal{S}:\,{\displaystyle \sum_{j\in s\cap\mathcal{C}}\alpha_{j}\leq c_{s}}\label{eq:ucftra-star-dual}\\
 &  & \forall j\in\mathcal{C}:\,\alpha_{j}\geq0\nonumber \end{eqnarray}

\end{multicols}

\textbf{Single Factor Analysis: }We apply the dual fitting technique
\cite{Jain03dualfitting} for the primal-dual algorithm's single factor
analysis. In order to utilize the weak duality relationship between
LP \eqref{eq:ucftra-star-dual} and LP \eqref{eq:ucftra-star-relax},
we need an algorithm that produces feasible primal ($x_{s}$'s) and
dual ($\alpha_{j}$'s) solutions. Denote the objective values of LPs
\eqref{eq:ucftra-star}, \eqref{eq:ucftra-star-relax} and \eqref{eq:ucftra-star-dual}
by $SOL_{ILP},\, SOL_{LP}$ and $SOL_{D}$ respectively, such an algorithm
establishes the relationship $SOL_{D}\leq SOL_{LP}\leq SOL_{ILP}$.
Note that $SOL_{D}\leq SOL_{LP}$ implies \textit{any} feasible $SOL_{D}$
is upper bounded by \textit{all} feasible $SOL_{LP}$, then apparently
after defining the optimal values of \eqref{eq:ucftra-star-relax}
and \eqref{eq:ucftra-star} as $OPT_{LP}$ and $OPT_{ILP}$ respectively,
we have $SOL_{D}\leq OPT_{LP}\leq OPT_{ILP}.$ However, our algorithm
produce a feasible primal solution but infeasible dual. This is because
some stars may overpay $c_{s}$ and therefore violate the constraint
of \eqref{eq:ucftra-star-dual}. Nevertheless, if we shrink the dual
by a factor $\rho$ and prove the fitted dual $\frac{\alpha_{j}}{\rho}$
is feasible, we get $\frac{SOL_{D}}{\rho}\leq SOL_{LP}\leq SOL_{ILP}$.
Therefore, if we denote $SOL_{P}$ as the total cost of the primal
solution produced by our algorithm, the \textit{key }steps to obtain
the approximation factor are: 1) establish a relationship between
$SOL_{P}$ and $SOL_{D}$ from our primal-dual algorithm; 2) find
a minimum $\rho$ and prove the fitted dual $\frac{\alpha_{j}}{\rho}$
is feasible. For step 1), we have the following lemmas:
\begin{lemma}
The total cost of the primal solution $SOL_{P}$ produced by the Primal-Dual
Algorithm is $\sum_{j\in\mathcal{C}}\sum_{1\leq q\leq r_{j}}$$\alpha_{j}^{q}$.
\label{lem:1}\end{lemma}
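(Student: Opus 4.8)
The plan is to prove the identity by maintaining an invariant throughout the run of the Primal-Dual Algorithm: at every moment, writing $y_i$ and $x_{ij}$ for the current partial solution, the sum of all dual values $\alpha_j^q$ that have already been frozen (i.e.\ over all ports $j^{(q)}$ that have connected via some execution of Action~1) equals the current primal cost $\sum_{i\in\mathcal{F}} f_i y_i + \sum_{i\in\mathcal{F}}\sum_{j\in\mathcal{C}} c_{ij} x_{ij}$. Every port $j^{(q)}$ with $1\le q\le r_j$ has its $\alpha_j^q$ assigned exactly once, inside Action~1, and it is never altered afterwards, even when that port is later reconnected in Action~2. Hence, when the while loop terminates we have $\mathcal{U}=\emptyset$, so every port of every client has connected, the frozen sum is exactly $\sum_{j\in\mathcal{C}}\sum_{1\le q\le r_j}\alpha_j^q$, and the invariant identifies it with $SOL_P$, which is what we want.

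I would prove the invariant by induction over the atomic updates performed by the algorithm: one execution of Action~1 triggered by Event~1, or one Event~2 step (a facility opening bundled with the reconnections and Action~1 calls it triggers). The base case is immediate, since initially $y_i=x_{ij}=0$ and no port is connected. For an Event~1 update, port $j^{(p_j)}$ connects to an already-open facility at $i$ at time $t=c_{ij}$ with $x_{ij}<y_i$: the frozen sum grows by $\alpha_j^{p_j}=c_{ij}$, and since $y_i$ does not change, the primal cost grows by $c_{ij}$ through the increment of $x_{ij}$; the two sides move together.

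The interesting case is Event~2, where a new facility opens at site $i$ and the primal cost jumps by $f_i$. Here I would use the triggering equality $f_i=\sum_{j\in\mathcal{U}}\max(0,t-c_{ij})+\sum_{j\in\mathcal{C}\setminus\mathcal{U}}\max\bigl(0,\max_q c_{\phi(j^{(q)})j}-c_{ij}\bigr)$ to absorb this jump: every unconnected $j\in\mathcal{U}$ with $t\ge c_{ij}$ connects its current port, adding $t$ to the frozen sum and $c_{ij}$ to the connection cost; every already-connected $j\in\mathcal{C}\setminus\mathcal{U}$ with a positive improvement reconnects its most expensive port to $i$, leaving the frozen sum unchanged and lowering the connection cost by exactly that improvement. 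Substituting the triggering equality, the net change of the primal cost collapses to $\sum_{j\in\mathcal{U},\,t\ge c_{ij}} t$, which is exactly the net change of the frozen sum, closing the induction.

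The step I expect to require the most care is the reconnection bookkeeping inside Event~2: one has to check that the amount by which a connected client's connection cost drops is precisely its term in the triggering sum, so that the money is neither lost nor double-counted and no previously frozen $\alpha_j^q$ needs to be revisited. This works because $\alpha_j^q$ is fixed at first connection while a port only ever reconnects to a strictly cheaper facility, so $\alpha_j^q$ minus the current connection cost of port $j^{(q)}$ is, at all times, exactly the contribution that port has made towards opened facilities; and because every client contributing positively to a facility is connected or reconnected to it in the very step it opens, $f_i$ is accounted for fully and tightly. Getting this consistency right is the heart of the argument; everything else is routine arithmetic.
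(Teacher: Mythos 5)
Your proof is correct and takes essentially the same route as the paper: the paper's own proof is a single sentence asserting that ``the sum of dual values of all ports fully pays all facility and connection costs even with reconnection of clients,'' which is precisely the invariant you establish and verify by induction over Event~1 and Event~2 updates. You simply supply the bookkeeping (the telescoping of each port's contribution under reconnection and the substitution of the Event~2 triggering equality) that the paper leaves implicit.
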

\begin{proof}
It is clear that in the algorithm, the sum of dual values of all ports
fully pays all facility and connection costs even with reconnection
of clients. Then the lemmas follows.\end{proof}
\begin{lemma}
Let the dual solution $\alpha_{j}^{r_{j}}$ returned by the Primal-Dual
Algorithm be a solution to LP \eqref{eq:ucftra-star-dual}, i.e. $\alpha_{j}=\alpha_{j}^{r_{j}}$,
then the corresponding $SOL_{D}\geq SOL_{P}$.\label{lem:p=00003Dd}\end{lemma}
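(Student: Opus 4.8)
The plan is to reduce the claim to a monotonicity property of the port duals and then invoke Lemma~\ref{lem:1}. By Lemma~\ref{lem:1}, $SOL_P=\sum_{j\in\mathcal{C}}\sum_{1\le q\le r_j}\alpha_j^q$, while by the definition of the dual objective of LP~\eqref{eq:ucftra-star-dual} together with the substitution $\alpha_j=\alpha_j^{r_j}$ we have $SOL_D=\sum_{j\in\mathcal{C}}r_j\alpha_j^{r_j}$. Hence it suffices to show that for every client $j$ and every index $q$ with $1\le q\le r_j$ we have $\alpha_j^{q}\le\alpha_j^{r_j}$; summing the nonnegative quantities $\alpha_j^{r_j}-\alpha_j^{q}$ over all $q$ and all $j$ then gives $SOL_D-SOL_P\ge 0$, which is the assertion.

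To prove $\alpha_j^{q}\le\alpha_j^{r_j}$, I would make two observations about the Primal-Dual Algorithm. First, each variable $\alpha_j^{q}$ is assigned exactly once during the run — precisely at the moment port $j^{(q)}$ is first connected, inside Action~1 (regardless of whether Action~1 is triggered by Event~1 or invoked from within Action~2) — and is set to the current value of $t$; the later reconnections performed in Action~2 only alter the primal variables $x$ and the assignment maps $\phi(\cdot)$, never the duals. Second, the ports of $j$ are connected in increasing order of index $1,2,\dots,r_j$, since the algorithm maintains a single pointer $p_j$ that is advanced by exactly one each time Action~1 is executed for $j$, and $j$ never re-enters $\mathcal{U}$ once removed. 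Because the time $t$ increases monotonically, the instant at which port $j^{(q)}$ is connected is no later than the instant at which port $j^{(q+1)}$ is connected, so $\alpha_j^{1}\le\alpha_j^{2}\le\cdots\le\alpha_j^{r_j}$, and in particular $\alpha_j^{q}\le\alpha_j^{r_j}$ for every $q\le r_j$.

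The point I would be most careful about is the situation, permitted by the Remark, in which several events occur simultaneously at a single value of $t$: if two facilities open at the same $t$ while $j$ is still in $\mathcal{U}$, then $j$ may have two consecutive ports connected at that same time, so the inequality $\alpha_j^{q}\le\alpha_j^{q+1}$ may be an equality. This is harmless, since only the non-strict inequality is needed; I would simply remark that processing the simultaneous events in any fixed order still advances $p_j$ by one unit per execution of Action~1 and still stamps each newly connected port with the same value of $t$, so the chain of inequalities $\alpha_j^{1}\le\cdots\le\alpha_j^{r_j}$ survives. Combining this with the reduction in the first paragraph yields $SOL_D\ge SOL_P$.
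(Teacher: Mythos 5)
Your proof is correct and follows the same route as the paper: both reduce the claim to the fact that $\alpha_j^{r_j}$ is the largest dual among client $j$'s ports and then compare $\sum_j r_j\alpha_j^{r_j}$ with $\sum_j\sum_q\alpha_j^q$ via Lemma~\ref{lem:1}. The paper simply asserts the maximality of $\alpha_j^{r_j}$, whereas you supply the (correct) justification from the monotonicity of $t$ and the ordered advancement of $p_j$; this is a welcome elaboration, not a different argument.
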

\begin{proof}
For a city $j$, $\alpha_{j}^{r_{j}}$ is the largest dual among its
ports. Because we let $\alpha_{j}=\alpha_{j}^{r_{j}}$ in LP \eqref{eq:ucftra-star-dual},
$SOL_{D}=\sum_{j\in\mathcal{C}}r_{j}\alpha_{j}^{r_{j}}\geq\sum_{j\in\mathcal{C}}\sum_{1\leq q\leq r_{j}}\alpha_{j}^{q}=SOL_{P}$.
\end{proof}
For step 2), if we find a minimum $\rho$ s.t. the fitted dual $\frac{\alpha_{j}}{\rho}$
($\alpha_{j}=\alpha_{j}^{r_{j}}$, from now on we use $\alpha_{j}$
for simplicity) is feasible, we will then get $SOL_{D}\leq\rho\cdot OPT_{ILP}$.
Together with the previous lemma, our algorithm is $\rho$-approximation.
The following lemma and corollary are immediate.
\begin{lemma}
Fitted dual $\frac{\alpha_{j}}{\rho}$ is feasible iff $\forall s\in\mathcal{S}:\,\sum_{j\in s\cap\mathcal{C}}\alpha_{j}{\displaystyle \leq\rho\cdot c_{s}}$.\end{lemma}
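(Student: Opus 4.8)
The plan is to just unwind the definition of feasibility for the dual LP \eqref{eq:ucftra-star-dual} when applied to the scaled vector $\left(\alpha_{j}/\rho\right)_{j\in\mathcal{C}}$. A point is feasible for \eqref{eq:ucftra-star-dual} precisely when it satisfies two families of constraints: the nonnegativity constraints $\alpha_{j}/\rho\geq0$ for every $j\in\mathcal{C}$, and the star-packing constraints $\sum_{j\in s\cap\mathcal{C}}\alpha_{j}/\rho\leq c_{s}$ for every $s\in\mathcal{S}$. So the statement amounts to showing that, given the first family always holds, the second family is equivalent to the asserted inequalities $\sum_{j\in s\cap\mathcal{C}}\alpha_{j}\leq\rho\cdot c_{s}$.

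First I would dispense with nonnegativity: each $\alpha_{j}=\alpha_{j}^{r_{j}}$ is a price accumulated by a port of client $j$ during the Primal-Dual Algorithm, where time $t$ increases monotonically from $0$, so $\alpha_{j}\geq0$; since the shrinking factor $\rho$ is a positive constant, $\alpha_{j}/\rho\geq0$ automatically. Hence the nonnegativity constraints impose nothing, and feasibility of the fitted dual is equivalent to the star-packing constraints alone. Then, for each fixed $s\in\mathcal{S}$, I would multiply the inequality $\sum_{j\in s\cap\mathcal{C}}\alpha_{j}/\rho\leq c_{s}$ through by $\rho>0$ to obtain $\sum_{j\in s\cap\mathcal{C}}\alpha_{j}\leq\rho\cdot c_{s}$, and divide back for the converse; this gives both directions of the \textit{iff} simultaneously.

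There is essentially no obstacle in this lemma; it is a bookkeeping step that isolates the single nontrivial requirement on $\rho$ that the subsequent analysis must verify. The only points worth stating explicitly are that $\rho>0$ (it is a shrink factor, ultimately fixed to a constant exceeding $1$) and that the algorithm's duals are nonnegative, both of which are immediate from the construction in the Primal-Dual Algorithm.
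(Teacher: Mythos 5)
Your proposal is correct and follows exactly the reasoning the paper intends: the paper states this lemma as ``immediate,'' and your unwinding of the dual constraints of LP \eqref{eq:ucftra-star-dual} — nonnegativity being automatic from $\alpha_{j}\geq0$ and $\rho>0$, and the star constraints rescaling by multiplication/division by $\rho$ — is precisely that immediate argument.
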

\begin{corollary}
W.l.o.g., assume a star $s$ consists of a site with opening cost
$f_{s}$ and $k$ clients s.t. $\alpha_{1}\leq\alpha_{2}\leq\cdots\leq\alpha_{k}$.
Denote the connection cost of $j$ to the site as $c_{sj}$, then
the fitted dual is feasible iff $\forall s\in\mathcal{S}:\,\sum_{j=1}^{k}\alpha_{j}\leq\rho\cdot\left(f_{s}+\sum_{j=1}^{k}c_{sj}\right)$,
i.e. $\rho\geq\frac{\sum_{j=1}^{k}\alpha_{j}}{f_{s}+\sum_{j=1}^{k}c_{sj}}$.\label{cor:1}
\end{corollary}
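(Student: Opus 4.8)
The plan is to prove Corollary~\ref{cor:1} as an essentially immediate consequence of the preceding lemma, by specializing the general feasibility condition to a single star with a named cost structure and then rewriting the inequality. First I would invoke the previous lemma, which states that the fitted dual $\frac{\alpha_j}{\rho}$ is feasible if and only if $\sum_{j\in s\cap\mathcal{C}}\alpha_j\leq\rho\cdot c_s$ holds for every $s\in\mathcal{S}$. The content of the corollary is just to unfold the definition $c_s=f_s+\sum_{j\in s\cap\mathcal{C}}c_{sj}$ stated earlier in the excerpt, and to fix notation for a generic star: it has a site with opening cost $f_s$ and $k$ client members, which we relabel $1,\dots,k$ so that their returned duals satisfy $\alpha_1\leq\alpha_2\leq\cdots\leq\alpha_k$. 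This relabeling is without loss of generality because the sum $\sum_{j=1}^{k}\alpha_j$ and the sum $\sum_{j=1}^{k}c_{sj}$ are both symmetric in the clients, so sorting the clients by their dual value changes nothing.

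Next I would substitute $c_s=f_s+\sum_{j=1}^{k}c_{sj}$ into the lemma's condition to get that feasibility is equivalent to $\sum_{j=1}^{k}\alpha_j\leq\rho\cdot\bigl(f_s+\sum_{j=1}^{k}c_{sj}\bigr)$ for all $s\in\mathcal{S}$. Since the cost of any star is strictly positive (an opened facility has $f_s>0$, or at worst $f_s+\sum_j c_{sj}>0$ whenever the star is nonempty and contributes to the dual constraint), we may divide both sides by $f_s+\sum_{j=1}^{k}c_{sj}$ to obtain the equivalent form $\rho\geq\frac{\sum_{j=1}^{k}\alpha_j}{f_s+\sum_{j=1}^{k}c_{sj}}$, which is exactly the stated inequality. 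I would remark that the "for all $s$" quantifier is what makes this a genuine constraint on $\rho$: the corollary reduces the task of bounding $\rho$ to understanding, over all stars, the worst-case ratio of the total dual charged by a star's clients to that star's cost.

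There is no real obstacle here — this corollary is purely a restatement that sets up the notation ($k$ clients, sorted duals $\alpha_1\leq\cdots\leq\alpha_k$, per-client connection costs $c_{sj}$) needed for the subsequent factor-revealing analysis. The only point requiring a word of care is the division step, i.e. ensuring the denominator $f_s+\sum_{j=1}^{k}c_{sj}$ is nonzero; this is harmless since a star with zero cost and a nonempty client set cannot satisfy $\sum_j\alpha_j\leq 0$ with positive duals, and such degenerate stars can be excluded from $\mathcal{S}$ or handled trivially. Thus the proof is a two-line derivation: apply the lemma, expand $c_s$, and divide.
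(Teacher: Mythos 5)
Your proof is correct and matches the paper's treatment: the paper labels this corollary ``immediate'' from the preceding lemma, and your derivation --- invoke the lemma, expand $c_{s}=f_{s}+\sum_{j\in s\cap\mathcal{C}}c_{sj}$, relabel the clients in sorted order of $\alpha_{j}$, and divide by the star's cost --- is exactly that unfolding. Your added remark about the denominator being nonzero is a reasonable, harmless point of care that the paper silently elides.
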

In order to find such a $\rho$, we first prove a couple of properties
that our algorithm holds and then use these properties to guide the
construction of a series of factor-revealing programs. Note that although
the following lemmas are analogous to the ones in \cite{Jain03dualfitting,Swamy08FTFL2.076}
for UFL and FTFL, they essentially reveal UFTRA's unique combinatorial
structure which holds properties both from UFL and FTFL.
\begin{lemma}
At time $t=\alpha_{j}-\epsilon,\,$a moment before port $j^{\left(r_{j}\right)}$
first time gets connected (because $\alpha_{j}=\alpha_{j}^{r_{j}}$),
$\forall1\leq h<j<k,$ let $r_{h,j}=\max_{i}c_{ih}$ if port $h^{\left(r_{h}\right)}$
is already connected to a facility of a site, otherwise let $r_{h,j}=\alpha_{h}$
$\left(\alpha_{h}=\alpha_{j}\right)$, then $\, r_{h,j}\geq r_{h,j+1}$.\label{lem:r}\end{lemma}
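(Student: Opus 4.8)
The plan is to treat this as the UFTRA counterpart of the ``radius is non-increasing'' lemma used in the dual-fitting analyses of JMS and of its FTFL adaptation \cite{Jain03dualfitting,Swamy08FTFL2.076}, and to reduce it to two structural facts about the Primal-Dual Algorithm. For a client $h$ all of whose ports are already connected (equivalently $h\in\mathcal{C}\setminus\mathcal{U}$, since ports attach in the order $1,\dots,r_h$), write $M_h(t)$ for the largest connection cost it currently incurs, i.e. $M_h(t)=\max_{q}c_{\phi(h^{(q)})h}$ at time $t$; this is the quantity denoted $\max_i c_{ih}$ in the statement and the one appearing on the second line of \eqref{eq:contribution}. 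The two facts I will establish are: (I) once $h\in\mathcal{C}\setminus\mathcal{U}$, the function $M_h(t)$ is non-increasing in $t$; and (II) at the instant $\tau_h=\alpha_h^{r_h}=\alpha_h$ when the last port $h^{(r_h)}$ first gets connected, $M_h(\tau_h)\le\alpha_h$. Granting (I) and (II), the lemma follows from a short case analysis.

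For (I): while $h\in\mathcal{C}\setminus\mathcal{U}$, the only step of the algorithm that touches a connection of $h$ is the reconnection in Action 2, which fires only when $M_h(t)-c_{ih}>0$ for the site $i$ just opened and which replaces a most expensive link of $h$ (of cost $M_h(t)$) by a link to $i$ of cost $c_{ih}<M_h(t)$; since all the remaining links of $h$ have cost $\le M_h(t)$, the new maximum is $\le M_h(t)$, and Action 1 is never applied to a client outside $\mathcal{U}$, so nothing can re-insert a dearer link. For (II): port $h^{(q)}$ is attached, in Action 1, at time $\alpha_h^{q}$ to a site of connection cost $\le\alpha_h^{q}$ (it equals $\alpha_h^q$ when triggered by Event 1 with $t=c_{ih}$, and is $\le t=\alpha_h^q$ when triggered from inside Action 2); as $t$ increases monotonically and ports attach in the order $1,\dots,r_h$, we get $\alpha_h^{1}\le\cdots\le\alpha_h^{r_h}=\alpha_h$, and while $h$ is still in $\mathcal{U}$ none of these links ever changes. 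Hence immediately after $h^{(r_h)}$ attaches every link of $h$ costs $\le\alpha_h$, i.e. $M_h(\tau_h)\le\alpha_h$; combined with (I) this yields $M_h(t)\le\alpha_h$ for all $t\ge\tau_h$.

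Now fix $h<j<k$ and set $t_j=\alpha_j-\epsilon\le t_{j+1}=\alpha_{j+1}-\epsilon$, the inequality holding because the clients of $s$ are sorted so that $\alpha_j\le\alpha_{j+1}$; let $\tau_h=\alpha_h$ be the time $h^{(r_h)}$ first attaches. If $\tau_h\le t_j$, then $r_{h,j}=M_h(t_j)$ and $r_{h,j+1}=M_h(t_{j+1})$, and (I) gives $r_{h,j}\ge r_{h,j+1}$. If $\tau_h>t_j$, then $h^{(r_h)}$ is unconnected at $t_j$, so $r_{h,j}=\alpha_h$; moreover $\tau_h\ge\alpha_j$ together with $\alpha_h\le\alpha_j$ (sortedness, since $h<j$) forces $\alpha_h=\alpha_j$, which is exactly the parenthetical in the statement. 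Within this case there are two sub-cases: if $\tau_h\le t_{j+1}$ then $r_{h,j+1}=M_h(t_{j+1})\le\alpha_h=r_{h,j}$ by the last sentence of the previous paragraph; if $\tau_h>t_{j+1}$ then $h^{(r_h)}$ is still unconnected at $t_{j+1}$, so $r_{h,j+1}=\alpha_h=r_{h,j}$. In every case $r_{h,j}\ge r_{h,j+1}$.

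The only delicate points are the $\epsilon$-bookkeeping — reading ``$h^{(r_h)}$ not yet connected at $\alpha_j-\epsilon$'' as $\alpha_h\ge\alpha_j$, hence $\alpha_h=\alpha_j$ — and the verification of Fact (II), which requires following the order in which the $r_h$ ports of $h$ attach and checking that each attaches at cost at most its own dual while $h$ remains in $\mathcal{U}$; the rest is the routine monotonicity argument above. I expect Fact (II), i.e.\ controlling a client's links across the moment it leaves $\mathcal{U}$, to be the main thing that must be argued carefully.
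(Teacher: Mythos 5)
Your proof is correct and takes essentially the same route as the paper's: the paper's entire argument is your Fact (I) --- a reconnection only ever replaces a client's most expensive link with a strictly cheaper one, so $\max_{i}c_{ih}$ never increases --- stated in a single sentence. Your Fact (II) (that $\max_{i}c_{ih}\leq\alpha_{h}$ at the moment $h^{\left(r_{h}\right)}$ first attaches) and the case analysis across the transition from $r_{h,j}=\alpha_{h}$ to $r_{h,j+1}=\max_{i}c_{ih}$ are left implicit in the paper, so your write-up is simply a more careful rendering of the same argument.
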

\begin{proof}
A client's ports always reconnect to a facility of a site with less
connection cost, so its maximum connection cost will never increase.
The lemma follows.\end{proof}
\begin{lemma}
For any star $s$ with $k$ clients, $\forall1\leq j\leq k:\,\sum_{h=1}^{j-1}\max\left(r_{h,j}-c_{sh},\,0\right)+\sum_{h=j}^{k}\max$$\left(\alpha_{j}-c_{sh},\,0\right)$$\leq f_{s}$.\label{lem:contri}\end{lemma}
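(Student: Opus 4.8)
The plan is to freeze the run of the Primal--Dual Algorithm at the single instant $t=\alpha_j-\epsilon$, measure the total amount that all clients offer toward opening one more facility at the site of the star $s$, observe that this amount can never exceed $f_s$, and finally recognise the left‑hand side of the claimed inequality as (the $\epsilon\to 0$ limit of) a partial sum of these offers. First I fix $s$, order its $k$ clients so that $\alpha_1\le\cdots\le\alpha_k$, fix $j$, and take $\epsilon>0$ small enough that no event other than the first connection of port $j^{(r_j)}$ occurs in $(\alpha_j-\epsilon,\alpha_j)$. Then I classify each client $h$ of $s$ by its status at time $t=\alpha_j-\epsilon$: if $h\ge j$, or $h<j$ with $\alpha_h=\alpha_j$, then port $h^{(r_h)}$ is still unconnected, so $h\in\mathcal{U}$ and by \eqref{eq:contribution} the offer of $h$ toward the site of $s$ is $\max(0,\,t-c_{sh})$; if $h<j$ with $\alpha_h<\alpha_j$, then all ports of $h$ are connected, so $h\in\mathcal{C}\setminus\mathcal{U}$ and its offer is $\max(0,\,\max_q c_{\phi(h^{(q)})h}-c_{sh})=\max(0,\,r_{h,j}-c_{sh})$ by the definition of $r_{h,j}$ from Lemma~\ref{lem:r}; in the borderline case $\alpha_h=\alpha_j$ the convention $r_{h,j}=\alpha_h=\alpha_j$ makes $\max(0,r_{h,j}-c_{sh})$ the $\epsilon\to0$ limit of $\max(0,t-c_{sh})$, so the two descriptions agree in the limit.

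The heart of the proof is the claim that, writing $C_s(t)$ for the sum of the offers \eqref{eq:contribution} over \emph{all} clients of $\mathcal{C}$ — exactly the quantity tested in Event~2 for the site of $s$ — one has $C_s(t)\le f_s$ at every time $t$. As a function of $t$, $C_s$ starts at $0$, grows continuously between events at rate $|\{h\in\mathcal{U}:c_{sh}<t\}|$, and all of its jumps are downward: a reconnection of a client in $\mathcal{C}\setminus\mathcal{U}$ only lowers $\max_q c_{\phi(h^{(q)})h}$, and when a client $h$ leaves $\mathcal{U}$ its offer changes from $\max(0,t-c_{sh})$ to $\max(0,\max_q c_{\phi(h^{(q)})h}-c_{sh})$, which is no larger because each connection cost satisfies $c_{\phi(h^{(q)})h}\le\alpha_h^{q}\le\alpha_h^{r_h}=t$. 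Whenever $C_s$ would reach $f_s$, Event~2 fires at the site of $s$ and caps the growth. The step I expect to be the main obstacle is that, since unboundedly many facilities may open there, Event~2 can fire in a \emph{cascade} at one instant; I plan to argue that this cascade terminates — each firing connects one additional port (via Action~1) of every client in $\mathcal{U}$ that can afford the site, so the cascade is bounded by the total number of ports — and that, by the monotonicity observations above, no firing raises $C_s$, so $C_s$ stays at most $f_s$ throughout. This is precisely the UFTRA‑specific point where the UFL‑style ``a facility is never overpaid'' argument must be reconciled with the FTFL‑style port bookkeeping, as the remark before Lemma~\ref{lem:r} anticipates.

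Granting the claim, the lemma follows quickly. The clients of $s$ form a subset of $\mathcal{C}$ and all offers are nonnegative, so the sum of the offers of $s$'s clients at time $\alpha_j-\epsilon$ is at most $C_s(\alpha_j-\epsilon)\le f_s$; by the classification above this sum equals $\sum_{h\notin\mathcal{U}}\max(0,r_{h,j}-c_{sh})+\sum_{h\in\mathcal{U}}\max(0,\alpha_j-\epsilon-c_{sh})$, where the index ranges over the clients of $s$, $\{h\notin\mathcal{U}\}=\{h<j:\alpha_h<\alpha_j\}$, and $\{h\in\mathcal{U}\}=\{h\ge j\}\cup\{h<j:\alpha_h=\alpha_j\}$. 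Letting $\epsilon\to0$ and using $r_{h,j}=\alpha_j$ for the borderline clients converts this into $\sum_{h=1}^{j-1}\max(0,r_{h,j}-c_{sh})+\sum_{h=j}^{k}\max(0,\alpha_j-c_{sh})\le f_s$, which is the stated inequality.
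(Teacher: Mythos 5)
Your proof is correct and takes essentially the same route as the paper's: the paper's one-line argument is precisely that at time $t=\alpha_j-\epsilon$ the total contribution \eqref{eq:contribution} of the star's clients to site $s$ cannot exceed $f_s$, and you have simply filled in the details (the classification of clients by membership in $\mathcal{U}$, the identification of the offers with the terms $\max(r_{h,j}-c_{sh},0)$ and $\max(\alpha_j-c_{sh},0)$, and the verification that the Event~2 invariant is maintained across reconnections and cascading openings). No gap.
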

\begin{proof}
The lemma follows because at time $t=\alpha_{j}-\epsilon$, in the
primal-dual algorithm the contribution of all clients (either connected
or unconnected) in star $s$ will not exceed the facility's opening
cost at site $s$.\end{proof}
\begin{lemma}
For clients $h,\, j$ in any star $s$ with $k$ clients s.t. $1\leq h<j\leq k:$
$r_{h}=r_{j}=r$, then ${\displaystyle \alpha_{j}\leq r_{h,j}+c_{sh}+c_{sj}}$.\label{lem:tri}\end{lemma}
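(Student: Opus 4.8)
The plan is to analyze the state of the Primal-Dual Algorithm at time $t=\alpha_j-\epsilon$, the instant just before port $j^{(r_j)}$ first gets connected, to exhibit one facility copy through which client $j$ could have been served by that time, and then to finish with the triangle inequality. Two structural facts drive the argument. First, since $\alpha_j=\alpha_j^{r_j}$, at time $t$ client $j$ still has the port $j^{(r_j)}$ unconnected, so $j\in\mathcal{U}$; moreover $j$ has been in $\mathcal{U}$ at every earlier moment, and reconnections are applied only to clients in $\mathcal{C}\setminus\mathcal{U}$ (the decrement step of Action 2), so no port of $j$ has ever been moved --- the set of facility copies $j$ is connected to has only grown over time. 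Second, at time $t$ client $j$ occupies exactly $r-1$ distinct facility copies.

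I would first dispose of the trivial case: if port $h^{(r_h)}$ is not yet connected at time $t$, then $\alpha_h>\alpha_j-\epsilon$, whereas $\alpha_h\le\alpha_j$ by the ordering of Corollary \ref{cor:1}, which forces $\alpha_h=\alpha_j$; by the definition in Lemma \ref{lem:r} we then have $r_{h,j}=\alpha_h=\alpha_j$, and $\alpha_j\le r_{h,j}+c_{sh}+c_{sj}$ is immediate. So assume $h^{(r_h)}$ is connected at time $t$. Because ports connect in ascending order, all $r$ of $h$'s ports are then connected, to $r$ distinct facility copies, each at connection cost at most $r_{h,j}=\max_q c_{\phi(h^{(q)})h}$. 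Since $j$ sits on only $r-1$ distinct copies, pigeonhole yields a facility copy $F$, located at a site $i$, to which $h$ is connected but $j$ is not; in particular $c_{ih}\le r_{h,j}$, and since $j$ misses the copy $F$ at site $i$ we have $x_{ij}<y_i$ at time $t$.

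The core step is to prove $c_{ij}\ge\alpha_j$. Let $t_F\le t$ be the time $F$ was opened. Since $j$ never reconnects up to time $t$ and is not on $F$ at time $t$, it was never on $F$, so throughout $[t_F,t]$ the copy $F$ at site $i$ is unoccupied by $j$ and hence $x_{ij}<y_i$ there. Suppose, for contradiction, $c_{ij}\le t$. Then from time $\max(c_{ij},t_F)$ onward, whenever $j$ is not connected to every copy at $i$ the algorithm connects a further port of $j$ to $i$ --- either Event 1 fires for the pair $(i,j)$, or a fresh copy opens at $i$ via Event 2 and its final clause (do Action 1 for all $j'\in\mathcal{U}$ with $t\ge c_{ij'}$) connects $j$ to that copy. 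Since $j\in\mathcal{U}$ throughout, this cannot stop before $x_{ij}=y_i$, i.e.\ before $j$ is connected to every copy at $i$, including $F$ --- contradicting the choice of $F$. Hence $c_{ij}>\alpha_j-\epsilon$ for every $\epsilon>0$, so $c_{ij}\ge\alpha_j$; combining with the triangle inequality $c_{ij}\le c_{ih}+c_{sh}+c_{sj}$ and with $c_{ih}\le r_{h,j}$ gives $\alpha_j\le r_{h,j}+c_{sh}+c_{sj}$, as desired.

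I expect the last step --- making the claim that $j$ must already be connected to every copy at site $i$ fully rigorous --- to be the main obstacle, precisely because UFTRA permits several facilities at one site and Event 2 may fire there repeatedly. The two observations that make it go through are: (i) a client in $\mathcal{U}$ is never reconnected, so a single witness $F$ certifies $x_{ij}<y_i$ over the whole interval $[t_F,t]$ and not merely at the instant $t$; and (ii) once a client in $\mathcal{U}$ with $c_{ij}\le t$ has become saturated at a site, every later opening at that site re-saturates it within the same Action 2, so being unsaturated at time $t$ is possible only if $c_{ij}>t$. Granting these, the remaining bookkeeping is routine.
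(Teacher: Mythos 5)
Your proof follows essentially the same route as the paper's: handle the case where $h^{(r_h)}$ is unconnected via $r_{h,j}=\alpha_h=\alpha_j$, otherwise use pigeonhole on $h$'s $r$ connections versus $j$'s at most $r-1$ to find a witness facility at a site $i'$ with $x_{i'j}<y_{i'}$, then combine $\alpha_j\le c_{i'j}$ with the triangle inequality and $c_{i'h}\le r_{h,j}$. The only difference is that you spell out carefully why $\alpha_j\le c_{i'j}$ (the paper simply asserts ``because $i'$ is already open''), and your justification --- that a client in $\mathcal{U}$ is never reconnected and is saturated at any site $i$ with $c_{ij}\le t$ by Event 1 and the final clause of Action 2 --- is correct and a worthwhile elaboration of the step the paper leaves implicit.
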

\begin{proof}
This is where we must enforce all clients have uniform $\mathcal{R}$.
At time $t=\alpha_{j}-\epsilon,\,$if port $h^{\left(r_{h}\right)}$
is still not connected, by Lemma \ref{lem:r} $\alpha_{j}=r_{h,j}$
and this lemma holds. Otherwise, client $h$'s ports have already
connected to $r$ different facilities (not necessary on different
sites) and $r_{h,j}=\max_{i}c_{ih}$. At time $t$, since $j$ has
at most $r-1$ connections, there is at least a facility s.t. $h$
connects to it but $j$ does not. Denote this facility by $i'$, by
triangle inequality we have $c_{i'j}\leq c_{sj}+c_{sh}+c_{i'h}$.
Also because $i'$ is already open, then $\alpha_{j}\leq c_{i'j}$.
The lemma holds from $r_{h,j}=\max_{i}c_{ih}\geq c_{i'h}$.\end{proof}
\begin{theorem}
Let $\rho=sup_{k\geq1}\left\{ \lambda_{k}\right\} $, i.e. the least
upper bound of $\lambda_{k}$ among all $k$ and\label{thm:FRL}{\small \begin{align}
\lambda_{k}=\textrm{maximize\,\,\,\,\,} & {\displaystyle \frac{\sum_{j=1}^{k}\alpha_{j}}{f+\sum_{j=1}^{k}d_{j}}}\nonumber \\
\textrm{subject to\,\,\,\,\,} & \forall1\leq j<k:\alpha_{j}\leq\alpha_{j+1}\nonumber \\
 & \forall1\leq h<j<k:r_{h,j}\geq r_{h,j+1}\nonumber \\
 & \forall1\leq h<j\leq k:{\displaystyle \alpha_{j}\leq r_{h,j}+d_{h}+d_{j}}\label{eq:FRP}\\
 & 1\leq j\leq k:\sum_{h=1}^{j-1}\max\left(r_{h,j}-d_{h},\,0\right)+\sum_{h=j}^{k}\max\left(\alpha_{j}-d_{h},\,0\right)\leq f\nonumber \\
 & 1\leq h\leq j<k:\alpha_{j},\, d_{j},\, f,\, r_{h,j}\geq0\nonumber \end{align}
}{\small \par}

Then the fitted dual is feasible.\end{theorem}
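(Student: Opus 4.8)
The plan is to derive the theorem directly from Corollary~\ref{cor:1} together with Lemmas~\ref{lem:r}, \ref{lem:contri} and \ref{lem:tri}, reading the maximization program \eqref{eq:FRP} simply as an abstraction of the constraints those lemmas impose on a single star. Fix an arbitrary star $s\in\mathcal{S}$ consisting of a site of opening cost $f_s$ together with $k$ clients, relabelled so that $\alpha_1\leq\alpha_2\leq\cdots\leq\alpha_k$; set $f=f_s$ and $d_j=c_{sj}$ for $1\leq j\leq k$, and take $r_{h,j}$ to be the quantities defined in Lemma~\ref{lem:r}. I would then observe that $(\alpha_j,d_j,f,r_{h,j})$ is a feasible point of the program defining $\lambda_k$: the first constraint is just the chosen ordering; $r_{h,j}\geq r_{h,j+1}$ is Lemma~\ref{lem:r}; $\alpha_j\leq r_{h,j}+d_h+d_j$ is Lemma~\ref{lem:tri}; the opening-cost bound $\sum_{h=1}^{j-1}\max(r_{h,j}-d_h,0)+\sum_{h=j}^{k}\max(\alpha_j-d_h,0)\leq f$ is Lemma~\ref{lem:contri}; and all of $\alpha_j,d_j,f,r_{h,j}$ are non-negative by their definitions.

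Feasibility of that point forces its objective value $\frac{\sum_{j=1}^{k}\alpha_j}{f_s+\sum_{j=1}^{k}c_{sj}}$ to be at most $\lambda_k$, hence at most $\rho=\sup_{k\geq1}\{\lambda_k\}$; equivalently $\sum_{j\in s\cap\mathcal{C}}\alpha_j\leq\rho\cdot c_s$. Since $s$ was arbitrary, Corollary~\ref{cor:1} yields $\sum_{j\in s\cap\mathcal{C}}\alpha_j\leq\rho\cdot c_s$ for every $s\in\mathcal{S}$, which is precisely the feasibility of the fitted dual $\alpha_j/\rho$ for LP~\eqref{eq:ucftra-star-dual}, i.e. the assertion of the theorem.

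The substantive content is already packaged into Lemmas~\ref{lem:r}--\ref{lem:tri}, so the argument here is essentially bookkeeping; the one place needing care is the appeal to Lemma~\ref{lem:tri}, whose hypothesis $r_h=r_j$ is exactly what makes $\alpha_j\leq r_{h,j}+d_h+d_j$ hold, and this is supplied by the uniform-$\mathcal{R}$ assumption rather than by any new work — without it the whole chain collapses, so this is the genuine obstacle and it is resolved by hypothesis. Two minor points I would flag in passing: the empty star $k=0$ is vacuous, and to avoid a division by zero when $f_s$ and all the $c_{sj}$ vanish one should use Corollary~\ref{cor:1} in the product form $\sum_{j=1}^{k}\alpha_j\leq\rho(f_s+\sum_{j=1}^{k}c_{sj})$, in which the reduction above goes through unchanged. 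Establishing that $\rho$ is finite (and, ultimately, the numerical value) is a separate task left to the subsequent analysis of the factor-revealing programs.
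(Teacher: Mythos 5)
Your proposal is correct and follows essentially the same route as the paper's own proof: for each star $s$ you instantiate the factor-revealing program with $f=f_s$, $d_j=c_{sj}$, verify feasibility via Lemmas~\ref{lem:r}, \ref{lem:contri} and \ref{lem:tri}, bound the objective by $\lambda_k\leq\rho$, and conclude via Corollary~\ref{cor:1}. The extra remarks (using the product form to avoid division by zero, and noting where uniform $\mathcal{R}$ enters) are sensible refinements of the same argument rather than a different approach.
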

\begin{proof}
Let $f=f_{s},\, d_{j}=c_{sj}$ together with $\alpha_{j},\, r_{h,j}$
constitute a feasible solution to the above program due to Lemma \ref{lem:r},
\ref{lem:contri} and \ref{lem:tri}. Hence $\forall s\in S,\,{\displaystyle \frac{\sum_{j=1}^{k}\alpha_{j}}{f_{s}+\sum_{j=1}^{k}c_{sj}}}$
$\leq\lambda_{k}\leq\rho$ and then theorem follows from Corollary
\ref{cor:1}.\end{proof}
\begin{theorem}
The Primal-Dual Algorithm and Star-Greedy Algorithm achieve 1.61-approximation
for UFTRA with uniform $\mathcal{R}$.\end{theorem}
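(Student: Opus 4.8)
The plan is to reduce the theorem to a single numerical fact about the program \eqref{eq:FRP} and then invoke the known analysis of that program. Chaining the results already established: by Lemma~\ref{lem:1} the primal cost satisfies $SOL_P=\sum_{j\in\mathcal{C}}\sum_{1\le q\le r_j}\alpha_j^q$, by Lemma~\ref{lem:p=00003Dd} this is at most $SOL_D=\sum_{j\in\mathcal{C}}r_j\alpha_j$ with $\alpha_j=\alpha_j^{r_j}$, and by Theorem~\ref{thm:FRL} the scaled dual $\alpha_j/\rho$ with $\rho=\sup_{k\ge1}\lambda_k$ is feasible for \eqref{eq:ucftra-star-dual}, so $SOL_D/\rho\le SOL_{LP}\le OPT_{ILP}$. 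Hence $SOL_P\le SOL_D\le\rho\cdot OPT_{ILP}$. Since the Primal-Dual Algorithm is an equivalent restatement of the Star-Greedy Algorithm, the two produce the same solution, so it remains only to show $\rho=\sup_{k\ge1}\lambda_k\le 1.61$.

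Next I would observe that the program \eqref{eq:FRP} is, constraint by constraint, the factor-revealing LP analyzed for metric facility location in \cite{Jain03dualfitting} and re-used for FTFL in \cite{Swamy08FTFL2.076}: $\alpha_j$ are the port prices, $d_j=c_{sj}$ the connection costs, $f=f_s$ the facility cost, and $r_{h,j}$ is the current maximum connection cost playing the role of the residual distance there; the sorting constraint $\alpha_j\le\alpha_{j+1}$, the monotonicity constraint $r_{h,j}\ge r_{h,j+1}$, the triangle-type constraint $\alpha_j\le r_{h,j}+d_h+d_j$, and the budget constraint $\sum_{h<j}\max(r_{h,j}-d_h,0)+\sum_{h\ge j}\max(\alpha_j-d_h,0)\le f$ all correspond. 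Therefore $\lambda_k$ is nondecreasing in $k$, the first several $\lambda_k$ are computed from the finite LP, and a closed-form family of feasible multipliers for that LP's dual certifies $\lambda_k\le 1.61$ for every $k$; this is exactly the argument of \cite{Jain03dualfitting}, which transfers verbatim once the correspondence above is in place, and it yields $\rho\le 1.61$. (A self-contained reproof is also possible: normalize $f+\sum_{j=1}^k d_j=1$, assume $\sum_{j=1}^k\alpha_j>1.61$ for contradiction, and sum the triangle-type inequalities against the certifying weights to violate the budget constraint.)

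The step I expect to be the real obstacle is pinning down that correspondence: one must verify that the extra freedom of UFTRA --- arbitrarily many facilities openable at a site, and several ports of one client allowed to connect within the same site --- introduces no constraint beyond those of \eqref{eq:FRP}, and conversely that every inequality used in the $1.61$ bound is available here. This is precisely what Lemmas~\ref{lem:r}, \ref{lem:contri}, and~\ref{lem:tri} supply, and the uniform-$\mathcal{R}$ hypothesis enters exactly at Lemma~\ref{lem:tri}: when $r_h=r_j=r$ and port $j^{(r_j)}$ is not yet connected at time $t=\alpha_j-\epsilon$, client $j$ has at most $r-1$ connections while $h$ has $r$, forcing an open facility that $h$ uses but $j$ does not, and the triangle inequality through that facility gives $\alpha_j\le r_{h,j}+d_h+d_j$. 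Under non-uniform $\mathcal{R}$ this counting fails --- a client with large requirement may already meet every facility $h$ uses --- so the reduction to \eqref{eq:FRP}, and hence the $1.61$ bound, genuinely needs uniformity; checking that nothing else in the resource-allocation setting breaks the reduction is the bulk of the work.
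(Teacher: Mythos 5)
Your proposal is correct and follows essentially the same route as the paper: chain Lemma~\ref{lem:1}, Lemma~\ref{lem:p=00003Dd}, and Theorem~\ref{thm:FRL} to get $SOL_P\le\rho\cdot OPT_{ILP}$, then identify the factor-revealing program \eqref{eq:FRP} with program (25) of \cite{Jain03dualfitting} to conclude $\lambda_k\le 1.61$ for all $k$. Your added commentary on where the correspondence could fail and why uniform $\mathcal{R}$ is indispensable (via Lemma~\ref{lem:tri}) is a useful elaboration of what the paper leaves implicit, but it is not a different argument.
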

\begin{proof}
The previous theorem and the weak duality theorem imply when $\rho=sup_{k\geq1}\left\{ \lambda_{k}\right\} $,
$SOL_{D}\leq$$\rho\cdot OPT_{ILP}$. Together with Lemma \ref{lem:p=00003Dd},
it concludes our algorithms are $\rho$-approximation. Also the factor-revealing
program \eqref{eq:FRP} we obtained is equivalent to program (25)
of \cite{Jain03dualfitting}, then we can directly use its result
to get $\forall k,\,\lambda_{k}\leq1.61$ and hence $\rho=1.61$.\medskip{}

\textbf{Bi-Factor Analysis: }We apply the inverse dual fitting technique
\cite{shihongftfa} to the primal-dual algorithm's bi-factor analysis
for its simplicity compared to dual fitting. Inverse dual fitting
considers scaled instances of the problem, and shows the duals of
original instances are feasible to the scaled instances. For UFTRA,
we scale any original instance $\mathcal{I}$'s facility cost by $\rho_{f}$
and connection cost by $\rho_{c}$ to get an instance $\mathcal{I}'$.
In particular in the original problem, let $SOL_{LP}=F_{SOL}+C_{SOL}$,
where $F_{SOL}$ and $C_{SOL}$ represent the total facility cost
and connection cost (they are possibly fractional) of any $SOL_{LP}$
respectively. In the scaled problem, if we define the corresponding
primal and dual costs as $SOL_{LP}'$ and $SOL_{D}'$ (with dual variable
$\alpha_{j}'$), then clearly $SOL_{LP}'=\rho_{f}\cdot F_{SOL}+\rho_{c}\cdot C_{SOL}$,
and if $\alpha_{j}'=\alpha_{j}$ that is feasible to the scaled problem,
by weak duality and Lemma \ref{lem:p=00003Dd} we have $SOL_{P}\leq SOL_{D}=SOL_{D}'\leq SOL_{LP}'$
and the following lemma and corollary.\end{proof}
\begin{lemma}
The Primal-Dual Algorithm is $\left(\rho_{f},\,\rho_{c}\right)$-approximation
iff $\forall s\in\mathcal{S}:\,$$\sum_{j\in s\cap\mathcal{C}}\alpha_{j}\leq\left(\rho_{f}\cdot f_{s}+\rho_{c}\sum_{j\in s\cap\mathcal{C}}c_{sj}\right)$.\end{lemma}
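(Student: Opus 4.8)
The plan is to prove this lemma by a direct adaptation of the dual-fitting argument used earlier for the single-factor case, now carried through the scaled instance $\mathcal{I}'$ rather than the original $\mathcal{I}$. The forward direction I would argue as follows. Suppose $\forall s\in\mathcal{S}:\,\sum_{j\in s\cap\mathcal{C}}\alpha_{j}\leq\rho_{f}\cdot f_{s}+\rho_{c}\sum_{j\in s\cap\mathcal{C}}c_{sj}$. Observe that the right-hand side is exactly $c_{s}'$, the cost of star $s$ in the scaled instance $\mathcal{I}'$ (since scaling multiplies $f_{s}$ by $\rho_{f}$ and each $c_{sj}$ by $\rho_{c}$). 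Hence the inequality says precisely that $\alpha_{j}'=\alpha_{j}$ satisfies every dual constraint of LP~\eqref{eq:ucftra-star-dual} written for $\mathcal{I}'$, so $\alpha'$ is dual-feasible for the scaled problem and $SOL_{D}'=\sum_{j}r_{j}\alpha_{j}$. Then the chain already assembled in the Bi-Factor Analysis paragraph applies: $SOL_{P}\leq SOL_{D}$ by Lemma~\ref{lem:p=00003Dd} (note $SOL_{D}$ here uses $\alpha_{j}=\alpha_{j}^{r_{j}}$ and equals $SOL_{D}'$ since the $\alpha_{j}$ are unchanged), $SOL_{D}'\leq SOL_{LP}'$ by weak duality on $\mathcal{I}'$, and $SOL_{LP}'=\rho_{f}\cdot F_{SOL}+\rho_{c}\cdot C_{SOL}$ for any feasible $SOL_{LP}$ of the original instance, in particular for $OPT_{LP}$. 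Combining, $SOL_{P}\leq\rho_{f}\cdot F_{OPT}+\rho_{c}\cdot C_{OPT}$, which is the definition of a $(\rho_{f},\rho_{c})$-approximation.

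For the converse direction I would argue contrapositively, exactly mirroring the role overpaying stars played in the single-factor discussion. If some star $s_{0}$ violates $\sum_{j\in s_{0}\cap\mathcal{C}}\alpha_{j}\leq\rho_{f}\cdot f_{s_{0}}+\rho_{c}\sum_{j\in s_{0}\cap\mathcal{C}}c_{s_{0}j}$, then $\alpha'$ is infeasible for the scaled instance, and one constructs an instance on which the algorithm's primal cost exceeds $\rho_{f}F_{OPT}+\rho_{c}C_{OPT}$: take the single site $s_{0}$ with its clients, open just that one facility to serve everyone, so $OPT_{LP}$ for the scaled instance is at most $c_{s_{0}}'$ while the algorithm's cost, which equals $\sum_{j} r_j \alpha_j \ge \sum_{j\in s_0\cap\mathcal C}\alpha_j$ (taking $r_j=1$), strictly exceeds it. This shows the iff is tight.

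The main obstacle I anticipate is purely bookkeeping rather than mathematical depth: one must be careful that the $\alpha_{j}$ appearing in the lemma statement are the fixed values $\alpha_{j}=\alpha_{j}^{r_{j}}$ produced by running the Primal-Dual Algorithm on the \emph{original} instance, and that these same numbers are simply reinterpreted — not recomputed — as a candidate dual for the scaled instance. The whole point of inverse dual fitting is that we never rerun the algorithm; we only check feasibility of the old duals against the new (scaled) constraints. I would also make explicit that $SOL_{LP}'=\rho_{f}F_{SOL}+\rho_{c}C_{SOL}$ holds for \emph{every} feasible primal $SOL_{LP}$, so in particular instantiating at an optimal solution gives the bound against $OPT_{LP}\le OPT_{ILP}$, which is what ``$(\rho_f,\rho_c)$-approximation'' means. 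Everything else is a one-line substitution, so the proof should be short — essentially the two-line display in the preceding paragraph promoted to a lemma statement, with the trivial converse added.
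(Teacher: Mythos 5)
Your forward direction is exactly the paper's argument: the right-hand side is the star cost $c_{s}'$ in the scaled instance $\mathcal{I}'$, so the inequality is precisely dual feasibility of the unchanged $\alpha_{j}$ for $\mathcal{I}'$, and the chain $SOL_{P}\leq SOL_{D}=SOL_{D}'\leq SOL_{LP}'=\rho_{f}F_{SOL}+\rho_{c}C_{SOL}$, instantiated at an optimal primal solution, gives the $(\rho_{f},\rho_{c})$ guarantee. The paper offers no more than this (the lemma is presented as immediate from the inverse-dual-fitting setup), so on the substantive direction you and the paper coincide.

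Your converse, however, does not work as written. The $\alpha_{j}$ in the lemma are outputs of running the Primal-Dual Algorithm on the \emph{original} instance $\mathcal{I}$; if you discard everything except the single site $s_{0}$ and its clients, the algorithm run on that new instance produces \emph{different} dual values (clients would pay until $s_{0}$ opens rather than being intercepted by other, cheaper facilities), so you cannot conclude that the algorithm's cost on the restricted instance exceeds $\rho_{f}f_{s_{0}}+\rho_{c}\sum_{j}c_{s_{0}j}$ merely because the old $\alpha_{j}$ violate that star's constraint. Moreover, a violated star constraint on a fixed instance does not by itself imply the algorithm's cost exceeds $\rho_{f}F_{OPT}+\rho_{c}C_{OPT}$ there; the dual bound is sufficient, not necessary, for the ratio on any particular instance. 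The honest reading of the paper's ``iff'' --- consistent with the earlier Lemma on fitted-dual feasibility and with Corollary \ref{cor:2}, which immediately converts the condition into a lower bound on $\rho_{c}$ --- is that the star inequalities are equivalent to feasibility of the duals for the scaled instance, i.e.\ to the condition under which this proof technique certifies the bi-factor ratio, not a claim that their failure on some star forces the algorithm to miss the ratio. I would either drop your converse or restate it at that level (violation of some star constraint is exactly infeasibility of $\alpha'$ for $\mathcal{I}'$, so the dual-fitting certificate fails); as a counterexample construction it has a genuine gap.
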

\begin{corollary}
W.l.o.g., assume a star $s$ consists of a site with opening cost
$f_{s}$ and $k$ clients s.t. $\alpha_{1}\leq\alpha_{2}\leq\cdots\leq\alpha_{k}$.
Denote the connection cost of $j$ to the site as $c_{sj}$, then
the Primal-Dual Algorithm is $\left(\rho_{f},\,\rho_{c}\right)$-approximation
iff $\forall s\in\mathcal{S}:\,\sum_{j=1}^{k}\alpha_{j}\leq\left(\rho_{f}\cdot f_{s}+\rho_{c}\sum_{j=1}^{k}c_{sj}\right)$,
i.e. $\rho_{c}\geq\frac{\sum_{j=1}^{k}\alpha_{j}-\rho_{f}\cdot f_{s}}{\sum_{j=1}^{k}c_{sj}}$.\label{cor:2}
\end{corollary}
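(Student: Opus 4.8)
The plan is to obtain the corollary as an immediate specialization and algebraic rearrangement of the preceding lemma, which already states the bi-factor feasibility condition as $\sum_{j\in s\cap\mathcal{C}}\alpha_{j}\leq\rho_{f}\cdot f_{s}+\rho_{c}\sum_{j\in s\cap\mathcal{C}}c_{sj}$ for every star $s\in\mathcal{S}$. First I would fix an arbitrary star $s$ and observe that both sides of this inequality are symmetric in the clients of $s$: the sums $\sum_{j\in s\cap\mathcal{C}}\alpha_{j}$ and $\sum_{j\in s\cap\mathcal{C}}c_{sj}$ are independent of how we enumerate the $k=\left|s\cap\mathcal{C}\right|$ clients. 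Hence we may relabel these clients as $1,2,\dots,k$ so that $\alpha_{1}\leq\alpha_{2}\leq\cdots\leq\alpha_{k}$ without loss of generality, which rewrites the lemma's condition for this star as $\sum_{j=1}^{k}\alpha_{j}\leq\rho_{f}\cdot f_{s}+\rho_{c}\sum_{j=1}^{k}c_{sj}$.

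Next I would transpose the $\rho_{f}\cdot f_{s}$ term and divide. Since each $c_{sj}\geq0$, the quantity $\sum_{j=1}^{k}c_{sj}$ is nonnegative; assuming it is strictly positive (the star contains at least one client at positive connection distance, the degenerate all-zero case being noted separately below), dividing through yields the equivalent inequality $\rho_{c}\geq\frac{\sum_{j=1}^{k}\alpha_{j}-\rho_{f}\cdot f_{s}}{\sum_{j=1}^{k}c_{sj}}$. Because each step in this chain is an equivalence and the lemma is already an ``iff'' statement, the corollary's ``iff'' follows directly: the Primal-Dual Algorithm is $\left(\rho_{f},\rho_{c}\right)$-approximation exactly when this bound on $\rho_{c}$ holds simultaneously for every star $s\in\mathcal{S}$.

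The only point that needs a word of care is the boundary case $\sum_{j=1}^{k}c_{sj}=0$ (for instance the empty star $k=0$, or a star all of whose clients coincide with site $s$ at zero distance), where the condition degenerates to $\sum_{j=1}^{k}\alpha_{j}\leq\rho_{f}\cdot f_{s}$ and cannot be put in the form with $\rho_{c}$ in the denominator; I would simply remark that this is subsumed by the lemma and that the ratio form in the corollary is read over stars with positive total connection cost. I do not anticipate any real obstacle here: all the substantive content lives in the earlier lemma, and the purpose of the corollary is purely to repackage that feasibility condition in the ordered, $\rho_{c}$-isolated shape that the subsequent bi-factor (factor-revealing) analysis will consume.
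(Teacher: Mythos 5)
Your proposal is correct and matches the paper's treatment: the paper offers no separate proof of this corollary, presenting it as an immediate relabeling and algebraic rearrangement of the preceding bi-factor feasibility lemma, which is exactly what you do. Your extra remark on the degenerate case $\sum_{j=1}^{k}c_{sj}=0$ is a reasonable (and slightly more careful) addition that the paper omits.
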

Similar to dual fitting, we wish to find the minimum value of $\rho_{c}$
for any $\rho_{f}\geq1$. We can construct a new factor revealing
program with objective function: $\lambda_{k}'=\textrm{maximize\,\,\,\,}{\displaystyle \frac{\sum_{j=1}^{k}\alpha_{j}-\rho_{f}f}{\sum_{j=1}^{k}d_{j}}}$
and having same constraints as the program \eqref{eq:FRP}. Clearly,
if $\rho_{c}=sup_{k\geq1}\left\{ \lambda_{k}'\right\} $, we have
$\forall s\in S,\,{\displaystyle {\displaystyle \frac{\sum_{j=1}^{k}\alpha_{j}-\rho_{f}f_{s}}{\sum_{j=1}^{k}c_{sj}}}\leq\lambda_{k}'\leq\rho_{c}}$,
which implies a $\left(\rho_{f},\,\rho_{c}\right)$-approximation
from Corollary \ref{cor:2}. Further, this program is equivalent to
program (36) of \cite{Jain03dualfitting}. Therefore from the result
of \cite{Mahdian021.52}, the Star-Greedy Algorithm is (1.11, 1.78)-approximation.
Finally, after the scaling of facility costs with factor 1.504 and
the similar greedy augmentation that runs in time $O\left(n^{3}\max_{j}r_{j}\right)$
by considering total $n_{f}\max_{j}r_{j}$ facilities, it is easy
to see the overall algorithm achieves the ratio of 1.5186. Details
are omitted.
\begin{theorem}
Star-Greedy Algorithm with cost scaling and greedy augmentation is
1.5186-approximation in time $O\left(n^{3}\max_{j}r_{j}\right)$.
\end{theorem}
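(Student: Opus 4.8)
The strategy is to feed the bi-factor guarantee just established into the standard cost-scaling plus greedy-augmentation reduction of \cite{Charikar051.7281.853,Mahdian021.52}, after verifying that both ingredients survive the passage from UFL to UFTRA. Recall that the bi-factor analysis gives the Star-Greedy (equivalently Primal-Dual) Algorithm in the strong sense: for \emph{every} feasible solution of LP~\eqref{eq:ucftra-star-relax} with facility cost $F^{*}$ and connection cost $C^{*}$, the produced integral solution has total cost at most $1.11\,F^{*}+1.78\,C^{*}$ (this is Corollary~\ref{cor:2} with $\rho_{f}=1.11$, $\rho_{c}=1.78$, combined with Lemma~\ref{lem:p=00003Dd} and weak duality). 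Fix a parameter $\delta\ge 1$, multiply every $f_{i}$ by $\delta$, and run the Star-Greedy Algorithm on the scaled instance; let $F$ and $C$ denote the facility and connection cost of the output measured with the \emph{original} $f_{i}$'s. Since an optimal original solution is still feasible for the scaled instance with cost $\delta F^{*}+C^{*}$, the bi-factor bound gives $\delta F+C\le 1.11\,\delta F^{*}+1.78\,C^{*}$.

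Next I would run greedy augmentation on this output, using the original $f_{i}$'s: repeatedly open one more facility at the site $i$ maximising $\mathrm{gain}(i)/f_{i}$, where $\mathrm{gain}(i)$ is the decrease in total connection cost obtained by letting every client move (at most) its currently most expensive port to $i$, minus $f_{i}$; stop when no positive-gain site remains. With this gain function the greedy-augmentation lemma of \cite{Charikar051.7281.853,Mahdian021.52} goes through and yields a solution of cost at most $F+F^{*}+C^{*}+F^{*}\cdot\max\bigl(0,\,\ln\tfrac{C-C^{*}}{F^{*}}\bigr)$. Substituting the bi-factor inequality and optimising over $\delta\ge 1$ is then the calculation of \cite{Mahdian021.52}: writing $C=C^{*}+xF^{*}$, the regime $x\le 1$ already yields a solution of cost at most $1.5186\cdot OPT$ with no augmentation, while for $x>1$ the augmented cost is at most $\max\{\,1.11+\ln\delta,\ 1+0.78/\delta\,\}\cdot OPT$, which is minimised around $\delta=1.504$ at the value $1.5186$. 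This establishes the approximation ratio.

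For the running time, cost scaling leaves the combinatorics unchanged, so Lemma~1 still bounds the Star-Greedy phase by $O(n^{3}\max_{j}r_{j})$. In the augmentation phase, once a site holds $\max_{j}r_{j}$ facilities every client can already place all of its ports there, so a further copy at that site has nonpositive gain; hence there are at most $n_{f}\max_{j}r_{j}$ useful openings overall, i.e. at most that many augmentation rounds. Each round scans the $n_{f}$ sites and, for each, evaluates $\mathrm{gain}(i)$ by summing over the $n_{c}$ clients in $O(n_{c})$ time (maintaining for each client its most expensive current connection), so a round costs $O(n_{f}n_{c})$ and the whole phase costs $O(n_{f}^{2}n_{c}\max_{j}r_{j})=O(n^{3}\max_{j}r_{j})$. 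Adding the two phases gives the claimed bound.

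The main obstacle is the adaptation of greedy augmentation to the fault-tolerant/multiplicity regime. Unlike in UFL, opening one extra copy of a facility at a site benefits a client only through its \emph{single} most expensive current connection, so the precise shape of $\mathrm{gain}(i)$, the stopping criterion, and above all the statement of the greedy-augmentation lemma must be re-derived and checked to still hold in the form quoted above; bounding the number of useful copies per site by $\max_{j}r_{j}$ is exactly what simultaneously keeps the augmentation correct and its running time within $O(n^{3}\max_{j}r_{j})$. Once these adaptations are verified, the remainder is a line-by-line transcription of \cite{Charikar051.7281.853,Mahdian021.52}, which is why the details are omitted.
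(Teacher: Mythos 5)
Your proposal follows exactly the route the paper takes (and merely sketches): plug the $(1.11,\,1.78)$ bi-factor guarantee into the cost-scaling ($\delta=1.504$) and greedy-augmentation framework of \cite{Charikar051.7281.853,Mahdian021.52}, and bound the augmentation phase by considering at most $n_{f}\max_{j}r_{j}$ facility copies to stay within $O\left(n^{3}\max_{j}r_{j}\right)$. Your write-up in fact supplies more of the omitted details (the precise gain function and the per-site bound of $\max_{j}r_{j}$ useful copies) than the paper does, and these are consistent with the paper's sketch.
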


\subsection{Capacitated UFTRA }

We observe that there is a strong connection between the well studied
Soft Capacitated Facility Location (SCFL) problem \cite{Shmoys97FL,jain01approximation,Mahdian021.52}
and the Capacitated UFTRA (CUFTRA) problem we consider here. In SCFL,
a facility $i$ is allowed to open multiple times with identical cost
$f_{i}$. This is similar to CUFTRA where a site has unconstrained
resources to allocate. We formulate the CUFTRA problem as ILP \eqref{eq:cuftra},
in which the third constraint limits the total requests a site is
able to serve (capacity of the site). Through investigating the work
for SCFL in \cite{Mahdian021.52}, we discover that the similar result
also holds for CUFTRA.

{\small \begin{equation}
\begin{array}{llc}
\mathrm{minimize} & \sum_{i\in\mathcal{F}}f_{i}y_{i}+\sum_{i\in\mathcal{F}}\sum_{j\in\mathcal{C}}c_{ij}x_{ij}\\
\mathrm{subject\, to} & \forall j\in\mathcal{C}:\,\sum_{i\in\mathcal{F}}x_{ij}\ge r_{j}\\
 & \forall i\in\mathcal{F},j\in\mathcal{C}:\, y_{i}-x_{ij}\geq0\\
 & \forall i\in\mathcal{F}:\,\sum_{j\in\mathcal{C}}x_{ij}\leq u_{i}y_{i}\\
 & \forall i\in\mathcal{F},j\in\mathcal{C}:\, x_{ij}\in\mathbb{Z}^{+}\\
 & \forall i\in\mathcal{F}:\, y_{i}\in\mathbb{Z}^{+}\end{array}\label{eq:cuftra}\end{equation}
}{\small \par}
\begin{theorem}
Any $\left(\rho_{f},\,\rho_{c}\right)$-approximation algorithm for
UFTRA implies a $\left(\rho_{f}+\rho_{c}\right)$-algorithm for CUFTRA.\end{theorem}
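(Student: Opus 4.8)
The plan is to adapt the classical reduction from soft-capacitated to uncapacitated facility location (as in \cite{Mahdian021.52}) to the fault-tolerant resource-allocation setting. Given a CUFTRA instance $\mathcal{I}$ with site capacities $u_i$, I would build a UFTRA instance $\mathcal{I}'$ on the same sites, clients and requirements $r_j$, keeping the facility costs $f_i$ but replacing every connection cost $c_{ij}$ by $c_{ij}':=c_{ij}+f_i/u_i$. First I would check that $\mathcal{I}'$ is still metric in the sense the UFTRA analysis uses: the only place the triangle inequality enters is Lemma~\ref{lem:tri}, in the form $c_{i'j}\le c_{sj}+c_{sh}+c_{i'h}$, and adding the nonnegative surcharges $f_i/u_i$ to each connection cost only loosens such inequalities, so $c_{i'j}'\le c_{sj}'+c_{sh}'+c_{i'h}'$ still holds. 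Consequently the $(\rho_f,\rho_c)$-guarantee applies to $\mathcal{I}'$ verbatim: running the algorithm on $\mathcal{I}'$ yields $(y_i',x_{ij}')$ that is UFTRA-feasible (so $\sum_i x_{ij}'\ge r_j$ and $x_{ij}'\le y_i'$), and by the bifactor bound behind Corollary~\ref{cor:2} its cost $F'+C'$, with $F'=\sum_i f_i y_i'$ and $C'=\sum_{i,j}(c_{ij}+f_i/u_i)x_{ij}'$, satisfies $F'+C'\le\rho_f F+\rho_c C$ for every fractionally feasible solution of $\mathcal{I}'$ whose facility and connection parts are $F$ and $C$.

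Next I would convert the UFTRA solution into a CUFTRA solution by keeping the assignment, $x_{ij}:=x_{ij}'$, and opening $y_i:=\max\{\,y_i',\ \lceil(\sum_j x_{ij}')/u_i\rceil\,\}$ facilities at each site. Feasibility is immediate: $y_i\ge y_i'\ge\max_j x_{ij}'$ gives $y_i-x_{ij}\ge0$; $u_i y_i\ge\sum_j x_{ij}'$ gives the capacity constraint; and the $y_i$ are integral. For the cost, the key observation is that at any site carrying load we have $y_i'\ge\max_j x_{ij}'\ge 1$, hence $\lceil(\sum_j x_{ij}')/u_i\rceil\le(\sum_j x_{ij}')/u_i+y_i'$, so in all cases $f_i y_i\le f_i y_i'+(f_i/u_i)\sum_j x_{ij}'$. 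Summing over sites, the CUFTRA solution costs at most $\sum_i f_i y_i'+\sum_{i,j}(c_{ij}+f_i/u_i)x_{ij}'=F'+C'$.

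Finally I would charge $F'+C'$ against an optimal CUFTRA solution $(y_i^\ast,x_{ij}^\ast)$ with facility cost $F^\ast$ and connection cost $C^\ast$. Viewed inside $\mathcal{I}'$, this solution is still feasible (the CUFTRA constraints include those of UFTRA), its facility cost is $F^\ast$, and its connection cost is $\sum_{i,j}(c_{ij}+f_i/u_i)x_{ij}^\ast=C^\ast+\sum_i(f_i/u_i)\sum_j x_{ij}^\ast\le C^\ast+\sum_i f_i y_i^\ast=C^\ast+F^\ast$, using $\sum_j x_{ij}^\ast\le u_i y_i^\ast$. Plugging $F:=F^\ast$, $C:=C^\ast+F^\ast$ into the bifactor bound gives $F'+C'\le\rho_f F^\ast+\rho_c(C^\ast+F^\ast)=(\rho_f+\rho_c)F^\ast+\rho_c C^\ast\le(\rho_f+\rho_c)(F^\ast+C^\ast)$, i.e. $(\rho_f+\rho_c)\cdot\mathit{OPT}_{\mathrm{CUFTRA}}$. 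Combined with the previous paragraph, the constructed CUFTRA solution is a $(\rho_f+\rho_c)$-approximation.

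The main obstacle, and the one step deserving real care, is the rounding in the conversion: the crude estimate $\lceil x\rceil\le x+1$ would leak an additive $f_i$ at every site, which is unaffordable; the fix is exactly that $y_i'\ge 1$ at every used site, so the stray $+1$ is absorbed into the UFTRA facility cost $F'$. The complementary bookkeeping point is that the surcharge $f_i/u_i$ sits in the \emph{connection} part of $\mathcal{I}'$ and is precisely what pays for the extra capacity copies; this is why the loss is additive ($\rho_f+\rho_c$) rather than multiplicative. The metric-preservation check and the feasibility verifications are routine.
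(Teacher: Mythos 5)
Your proposal is correct and follows essentially the same route as the paper's (sketched) proof: a Lagrangian-style reduction that absorbs the capacity constraint into surcharged connection costs $c_{ij}+f_i/u_i$, an application of the bifactor guarantee against the optimal CUFTRA solution viewed as a UFTRA solution, and the same rounding rule $y_i=\max\bigl(\lceil\sum_j x'_{ij}/u_i\rceil,\,y'_i\bigr)$ that the paper states. You have in fact supplied the details the paper omits (metric preservation, absorbing the $+1$ from the ceiling into $y'_i\ge 1$, and the final charging), and your accounting reaches the claimed $(\rho_f+\rho_c)$ bound without the extra facility-cost scaling the paper mentions, which is a harmless simplification.
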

\begin{proof}
With the generalized Lagrangian relaxation technique similar to \cite{Mahdian021.52},
we can move the third constraint of CUFTRA into its objective function,
thereby constructing a new UFTRA problem as a relaxation of CUFTRA.
Afterwards, we scale the UFTRA instance's facility costs by $\frac{\rho_{c}}{\rho_{f}}$
and solve the instance using the Star-Greedy Algorithm with output
$Y_{i}$'s and $X_{ij}$'s. Finally we can prove $y_{i}=\max\left(\lceil\sum_{j\in\mathcal{C}}X_{ij}/u_{i}\rceil,\, Y_{i}\right)$%
\footnote{A small bug is fixed here for the COCOON 2011's camera-ready version.%
} and $x_{ij}=X_{ij}$ construct a feasible solution to ILP \eqref{eq:cuftra}
and lead to $\left(\rho_{f}+\rho_{c}\right)$-approximation. Details
are omitted.
\end{proof}
The following theorem is then immediate from the bi-factor result
of $\left(1.11,\,1.78\right)$ for UFTRA with uniform $\mathcal{R}$.
\begin{theorem}
CUFTRA with uniform $\mathcal{R}$ achieves an approximation factor
of 2.89 in time $O\left(n^{3}\max_{j}r_{j}\right)$.
\end{theorem}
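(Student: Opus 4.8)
The plan is to derive the bound purely by instantiating the reduction of the preceding theorem with the sharpest bi-factor guarantee already available for UFTRA, so that essentially no new analysis is needed. First I would recall the outcome of the inverse-dual-fitting bi-factor analysis: for UFTRA with uniform $\mathcal{R}$, the Primal-Dual (equivalently, Star-Greedy) Algorithm returns duals $\alpha_{j}$ such that, for every fixed $\rho_{f}\ge 1$, once $\rho_{c}=\sup_{k\ge 1}\lambda_{k}'$ we have $\sum_{j=1}^{k}\alpha_{j}\le \rho_{f}f_{s}+\rho_{c}\sum_{j=1}^{k}c_{sj}$ for every star $s$. Since the factor-revealing program defining $\lambda_{k}'$ coincides with program (36) of \cite{Jain03dualfitting}, the numerical bounds of \cite{Mahdian021.52} yield the attainable point $(\rho_{f},\rho_{c})=(1.11,1.78)$, and by Corollary \ref{cor:2} together with Lemma \ref{lem:p=00003Dd} this makes the algorithm a $(1.11,1.78)$-approximation for UFTRA with uniform $\mathcal{R}$ (note we use the raw bi-factor here, without the cost-scaling/greedy-augmentation postprocessing).

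Second, I would feed this into the preceding theorem, which turns any $(\rho_{f},\rho_{c})$-approximation for UFTRA into a $(\rho_{f}+\rho_{c})$-approximation for CUFTRA via a Lagrangian relaxation of the capacity constraint, a rescaling of the facility costs by $\rho_{c}/\rho_{f}$ (legitimate since $\rho_{f}=1.11\ge 1$), a single call to the Star-Greedy Algorithm producing $Y_{i}$'s and $X_{ij}$'s, and the rounding $y_{i}=\max(\lceil\sum_{j\in\mathcal{C}}X_{ij}/u_{i}\rceil,\,Y_{i})$, $x_{ij}=X_{ij}$. Plugging in $\rho_{f}+\rho_{c}=1.11+1.78=2.89$ gives the claimed ratio; and since $(1.11,1.78)$ is the point of the attainable tradeoff curve minimizing the sum $\rho_{f}+\rho_{c}$, this is the best bound this route delivers, matching the stated $2.89$.

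Third, for the running time I would simply account for the three stages of the reduction: constructing the relaxed-and-rescaled UFTRA instance from the CUFTRA data, costing $O(n^{2})$; one invocation of the Star-Greedy Algorithm, which runs in $O(n^{3}\max_{j}r_{j})$ by the runtime lemma for the Primal-Dual Algorithm; and the final rounding step, costing $O(n^{2})$. The second stage dominates, so the overall running time is $O(n^{3}\max_{j}r_{j})$.

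The single point I would treat with care is that the bi-factor guarantee $(1.11,1.78)$, unlike a weaker general bound, relies on the uniform-$\mathcal{R}$ hypothesis through Lemma \ref{lem:tri}; I would verify that the CUFTRA$\to$UFTRA reduction keeps $\mathcal{R}$ uniform, which it does because only the facility and connection costs are modified while the client demands $r_{j}$ are left untouched, so the Star-Greedy Algorithm is legitimately applied with its uniform-$\mathcal{R}$ ratio on the constructed instance. Beyond that consistency check, the theorem is the arithmetic consequence $1.11+1.78=2.89$ of results already in hand, and no further factor-revealing program needs to be solved.
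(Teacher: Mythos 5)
Your proposal is correct and follows the paper's own route exactly: the paper derives this theorem as an immediate consequence of the preceding reduction (any $\left(\rho_{f},\,\rho_{c}\right)$-approximation for UFTRA yields a $\left(\rho_{f}+\rho_{c}\right)$-approximation for CUFTRA) instantiated with the $\left(1.11,\,1.78\right)$ bi-factor guarantee, giving $1.11+1.78=2.89$. Your additional checks that the reduction preserves uniform $\mathcal{R}$ and that the Star-Greedy invocation dominates the running time are sensible elaborations of details the paper leaves implicit.
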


\section{Constrained FTRA }

In CFTRA, the only difference compared to UFTRA is the number of resources
to allocate at site $i$ is limited by $R_{i}$ $\left(R_{i}\geq1\right)$.
However, this constraint introduces a harder problem to solve since
FTFL is a special case of CFTRA when $\forall i:$ $R_{i}=1$. Also,
from the practical point of view, the CFTRA model plays an important
role in the resource constrained allocation. The problem's LP-relaxation
and dual are displayed below.

\setlength{\columnsep}{10pt}
\begin{multicols}{2}

{\small \[
\begin{array}{llc}
\mathrm{minimize} & \sum_{i\in\mathcal{F}}f_{i}y_{i}+\sum_{i\in\mathcal{F}}\sum_{j\in\mathcal{C}}c_{ij}x_{ij}\\
\mathrm{subject\, to} & \forall j\in\mathcal{C}:\,\sum_{i\in\mathcal{F}}x_{ij}\ge r_{j}\\
 & \forall i\in\mathcal{F},j\in\mathcal{C}:\, y_{i}-x_{ij}\geq0\\
 & \forall i\in\mathcal{F}:\, y_{i}\leq R_{i}\\
 & \forall i\in\mathcal{F},j\in\mathcal{C}:\, x_{ij}\geq0\\
 & \forall i\in\mathcal{F}:\, y_{i}\geq0\end{array}\]
}{\small \par}

{\small \columnbreak }{\small \par}

{\small \[
\begin{array}{llc}
\textrm{maximize} & \sum_{j\in\mathcal{C}}r_{j}\alpha_{j}-\sum_{i\in\mathcal{F}}z_{i}\\
\mathrm{subject\, to} & \forall i\in\mathcal{F}:\,\sum_{j\in\mathcal{C}}\beta_{ij}\leq f_{i}+\frac{z_{i}}{R_{i}}\\
 & \forall i\in\mathcal{F},j\in\mathcal{C}:\,\alpha_{j}-\beta_{ij}\leq c_{ij}\\
 & \forall i\in\mathcal{F},j\in\mathcal{C}:\,\beta_{ij}\geq0\\
 & \forall j\in\mathcal{C}:\,\alpha_{j}\geq0\\
 & \forall i\in\mathcal{F}:\, z_{i}\geq0\end{array}\]
}{\small \par}

\end{multicols}

After adding an extra constraint $y_{i}<R_{i}$ to Event 2 of the
Primal-dual Algorithm for UFTRA, it is clear that the slightly modified
algorithm computes a feasible primal solution to CFTRA. The question
left is whether the same approximation ratio preserves for CFTRA with
uniform $\mathcal{R}$. The first observation we make is that the
Lemma \ref{lem:contri} for UFTRA fails to hold for CFTRA since the
number of facilities at a site is limited. Therefore, results of UFTRA
do not directly lead to CFTRA's approximation guarantee. In fact,
CFTRA's combinatorial structure generalizes FTFL's. So now we try
to extend the solution to the uniform FTFL in \cite{Swamy08FTFL2.076}.
W.l.o.g., we set {\small $z_{i}=\sum_{j}\theta_{ij}=\begin{cases}
\sum_{j}x_{ij}\left(\alpha_{j}-\alpha_{j}^{l}\right) & \textrm{primarily}\, x_{ij}=R_{i}\\
\sum_{j}0 & \textrm{otherwise}\end{cases}$} where $l$ denotes the last port of $j$ that connects to $i$.
Then we have $SOL_{D}=\sum_{j\in\mathcal{C}}r_{j}\alpha_{j}-\sum_{i\in\mathcal{F}}z_{i}\geq SOL_{P}$
(similar to Lemma \ref{lem:p=00003Dd}). Afterwards, using dual fitting
we can prove $\forall i\in\mathcal{F}:$ $\sum_{j\in\mathcal{C}}\left(\alpha_{j}-\frac{\theta_{ij}}{R_{i}}\right)\leq\rho\left(f_{i}+\sum_{j}c_{ij}\right)$
which implies 1.5186-approximation for CFTRA using inverse dual fitting.
Details are omitted. 
\begin{theorem}
CFTRA with uniform $\mathcal{R}$ can be approximated with a factor
of 1.5186 in time $O\left(n^{3}\max_{j}r_{j}\right)$.
\end{theorem}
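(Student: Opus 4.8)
The plan is to retrace the UFTRA analysis of the previous subsection while importing, from the uniform-FTFL treatment of \cite{Swamy08FTFL2.076}, the bookkeeping needed for sites that become \emph{saturated}, i.e.\ reach $y_i=R_i$. First I would pin down the modified Primal-Dual Algorithm precisely: the only change is that Event~2 is suppressed at a site $i$ once $y_i=R_i$, so the output automatically satisfies $y_i\le R_i$ and is primal-feasible for CFTRA. As noted above, this destroys Lemma~\ref{lem:contri}: once $i$ has opened its $R_i$-th facility, a client $j$ with $x_{ij}=R_i$ can no longer reconnect to $i$, so its contribution to $i$ freezes at the value it had when its last port $l$ connected, namely $\alpha_j^l$ rather than $\alpha_j$. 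To compensate I would adopt exactly the dual set up above: $\beta_{ij}=\max(\alpha_j-c_{ij},0)$ as in the unconstrained case, and $z_i=\sum_j\theta_{ij}$ with $\theta_{ij}=x_{ij}(\alpha_j-\alpha_j^l)$ when $i$ was filled to capacity by $j$ and $\theta_{ij}=0$ otherwise.

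With this in hand the chain of reductions carries over almost verbatim. The analogue of Lemma~\ref{lem:p=00003Dd} is $SOL_D=\sum_{j\in\mathcal{C}}r_j\alpha_j-\sum_{i\in\mathcal{F}}z_i\ge SOL_P$: the term $\sum_i z_i$ cancels precisely the shortfall of those ports that, because of a capacity constraint, paid strictly less than $\alpha_j$, so the residual dual still dominates the total primal cost --- facility plus connection, reconnections included, all paid out of the $\alpha_j^q$'s by the analogue of Lemma~\ref{lem:1}. Then, just as Corollaries~\ref{cor:1} and~\ref{cor:2} did for UFTRA, $\rho$-approximation (resp.\ $(\rho_f,\rho_c)$-approximation) reduces to showing that for every site $i$, $\sum_{j\in\mathcal{C}}\left(\alpha_j-\frac{\theta_{ij}}{R_i}\right)\le\rho\left(f_i+\sum_j c_{ij}\right)$ --- which is just the dual constraint $\sum_j\beta_{ij}\le f_i+z_i/R_i$ rewritten --- together with its bi-factor variant.

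The core of the argument is bounding this per-site inequality by a factor-revealing program. I would re-run Lemmas~\ref{lem:r}--\ref{lem:tri}, but now partition the clients of a fixed star according to whether site $i$ ever became saturated for them: for the unsaturated clients the original contribution bound still applies, while for the saturated ones one shows --- using the uniform-$\mathcal{R}$ triangle-inequality argument behind Lemma~\ref{lem:tri} together with the definition of $\theta_{ij}$ --- that subtracting $\theta_{ij}/R_i$ from $\alpha_j$ restores an inequality of the same shape as in the UFTRA program. The payoff is that the resulting program coincides with the uniform-FTFL factor-revealing program, program~(25) of \cite{Jain03dualfitting} (and program~(36) in the bi-factor case), so one inherits $\rho=1.61$ and the bi-factor pair $(1.11,\,1.78)$. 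Scaling facility costs by $1.504$ and appending the greedy-augmentation post-processing over the $n_f\max_j r_j$ candidate facilities --- which never violates an $R_i$, since augmentation only closes facilities --- then yields the factor $1.5186$, while the running time is dominated by client reconnections exactly as in the runtime lemma, giving $O(n^3\max_j r_j)$.

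The step I expect to be the main obstacle is the saturated-client portion of the factor-revealing program: one has to check that the $\theta_{ij}/R_i$ correction is \emph{precisely} strong enough to collapse the per-site inequality onto the FTFL program rather than onto something strictly weaker, and in particular that the monotonicity relations among the $\alpha_j$'s and the $r_{h,j}$'s survive the modification, so that the feasible region of the new program is genuinely contained in that of \cite{Jain03dualfitting}.
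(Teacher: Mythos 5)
Your proposal follows essentially the same route as the paper: the same suppression of Event~2 once $y_i=R_i$, the same choice of $z_i=\sum_j\theta_{ij}$ with $\theta_{ij}=x_{ij}(\alpha_j-\alpha_j^l)$ for capacity-saturated pairs, the same residual dual bound $SOL_D=\sum_j r_j\alpha_j-\sum_i z_i\ge SOL_P$, and the same reduction to the per-site inequality $\sum_{j}\bigl(\alpha_j-\theta_{ij}/R_i\bigr)\le\rho\bigl(f_i+\sum_j c_{ij}\bigr)$ followed by cost scaling and greedy augmentation. The saturated-client factor-revealing step you flag as the main obstacle is precisely the part the paper also leaves as ``details are omitted,'' so your account is as complete as the paper's own.
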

Moreover, we study the generalized CFTRA (GCFTRA) problem where facility
costs on each site are allowed to be different. This problem is more
general in the case that costs of resources at each site are not necessary
identical. For GCFTRA, we have a different problem formulation \eqref{eq:gcftra-ds}
that identifies individual facility $f_{i}^{d}$ at each site $i$. 

{\small \begin{equation}
\begin{array}{llc}
\mathrm{minimize} & \sum_{i=1}^{n_{f}}\sum_{d=1}^{R_{i}}f_{i}^{d}y_{i}^{d}+\sum_{i=1}^{n_{f}}\sum_{j=1}^{n_{c}}c_{ij}x_{ij}\\
\mathrm{subject\, to} & \forall1\leq j\leq n_{c}:\,\sum_{i=1}^{n_{f}}x_{ij}\ge r_{j}\\
 & \forall1\leq i\leq n_{f},1\leq j\leq n_{c}:\,\sum_{d=1}^{R_{i}}y_{i}^{d}-x_{ij}\geq0\\
 & \forall1\leq i\leq n_{f},1\leq d\leq R_{i}:\, y_{i}^{d}\leq1\\
 & \forall1\leq i\leq n_{f},1\leq j\leq n_{c}:\, x_{ij}\in\mathbb{Z}^{+}\\
 & \forall1\leq i\leq n_{f},1\leq d\leq R_{i}:\, y_{i}^{d}\in\left\{ 0,1\right\} \end{array}\label{eq:gcftra-ds}\end{equation}
}{\small \par}

Now, we consider to reduce this problem to an FTFL problem \cite{Jain00FTFL}.
Instead of clustering facilities within sites as GCFTRA does, we put
all facilities (totally $\sum_{i}R_{i}$) of a GCFTRA instance together
(without separating them by sites) and consider them as a whole. This
transformation then brings in an FTFL problem shown in ILP \eqref{eq:gcftra-ftfl}.
We prove that GCFTRA is pseudo-polynomial time reducible to FTFL,
i.e. ILPs \eqref{eq:gcftra-ftfl} and \eqref{eq:gcftra-ds} are equivalent.

{\small \begin{equation}
\begin{array}{llc}
\mathrm{minimize} & \sum_{k=1}^{\sum_{i}R_{i}}f_{k}y_{k}+\sum_{k=1}^{\sum_{i}R_{i}}\sum_{j=1}^{n_{c}}c_{kj}x_{kj}\\
\mathrm{subject\, to} & \forall1\leq k\leq\sum_{i}R_{i},1\leq j\leq n_{c}:\, y_{k}-x_{kj}\ge0\\
 & \forall1\leq j\leq n_{c}:\,\sum_{k=1}^{\sum_{i}R_{i}}x_{kj}\geq r_{j}\\
 & \forall1\leq k\leq\sum_{i}R_{i}:\, y_{k}\leq1\\
 & \forall1\leq k\leq\sum_{i}R_{i},1\leq j\leq n_{c}:\, x_{kj}\in\left\{ 0,1\right\} \\
 & \forall1\leq k\leq\sum_{i}R_{i}:\, y_{k}\in\left\{ 0,1\right\} \end{array}\label{eq:gcftra-ftfl}\end{equation}
}{\small \par}
\begin{theorem}
GCFTRA is pseudo-polynomial time reducible to FTFL.\end{theorem}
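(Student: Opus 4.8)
The plan is to establish that ILPs~\eqref{eq:gcftra-ds} and~\eqref{eq:gcftra-ftfl} are equivalent by exhibiting an explicit cost-preserving correspondence between their feasible solutions, and then to observe that assembling instance~\eqref{eq:gcftra-ftfl} from a given GCFTRA instance takes time proportional to $\sum_{i}R_{i}$, which is polynomial only when the $R_i$ are encoded in unary and is pseudo-polynomial in general. First I would pin down the facility correspondence underlying the transformation: index the $\sum_i R_i$ FTFL facilities by pairs, writing $k(i,d)$ for the facility obtained from the $d$-th resource of site $i$ (for $1\le d\le R_i$), and set $f_{k(i,d)}=f_i^d$ and $c_{k(i,d)j}=c_{ij}$ for every client $j$, so that all facilities in the block arising from site $i$ inherit $i$'s connection costs. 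This is exactly the transformation that ``puts all facilities together without separating them by sites''.

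For the FTFL $\to$ GCFTRA direction, given a feasible $(y_k,x_{kj})$ for~\eqref{eq:gcftra-ftfl} I would set $y_i^d:=y_{k(i,d)}$ and $x_{ij}:=\sum_{d=1}^{R_i}x_{k(i,d)j}$. Then $\sum_{i}x_{ij}=\sum_{k}x_{kj}\ge r_j$, the coupling $\sum_{d}y_i^d-x_{ij}=\sum_{d}\bigl(y_{k(i,d)}-x_{k(i,d)j}\bigr)\ge 0$ holds termwise, the bounds $y_i^d\in\{0,1\}$ and $x_{ij}\in\mathbb{Z}^{+}$ are inherited, and both cost components match because every facility in block $i$ has connection cost $c_{ij}$: $\sum_i\sum_d f_i^d y_i^d=\sum_k f_k y_k$ and $\sum_i\sum_j c_{ij}x_{ij}=\sum_k\sum_j c_{kj}x_{kj}$.

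The GCFTRA $\to$ FTFL direction is the only one that needs more than a substitution, and it is where I expect the (mild) obstacle to lie: one must redistribute each $x_{ij}$ over the open facilities of block $i$ while keeping the resulting $x_{k(i,d)j}$ in $\{0,1\}$. Given feasible $(y_i^d,x_{ij})$, integrality together with $\sum_{d}y_i^d\ge x_{ij}$ gives $\bigl|\{\,d:y_i^d=1\,\}\bigr|\ge x_{ij}$, so for each pair $(i,j)$ I can choose $S_{ij}\subseteq\{\,d:y_i^d=1\,\}$ with $|S_{ij}|=x_{ij}$ and declare $x_{k(i,d)j}=1$ iff $d\in S_{ij}$, keeping $y_{k(i,d)}:=y_i^d$. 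Every connection is then made only to an open facility, so $y_k\ge x_{kj}$; summing, $\sum_k x_{kj}=\sum_i|S_{ij}|=\sum_i x_{ij}\ge r_j$; all variables are $0$-$1$ as required; and costs are preserved since $\sum_d c_{ij}x_{k(i,d)j}=c_{ij}|S_{ij}|=c_{ij}x_{ij}$. Combining the two directions shows the two ILPs share the same optimum, and in fact any feasible (in particular, any $\alpha$-approximate) FTFL solution pulls back to a GCFTRA solution of the same cost, so an $\alpha$-approximation for FTFL yields one for GCFTRA. Finally I would record that, because $\sum_i R_i$ can be exponential in the binary input length, the whole construction --- and hence the reduction --- runs in pseudo-polynomial rather than polynomial time, which is all the theorem claims.
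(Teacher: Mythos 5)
Your proof is correct and uses the same transformation as the paper: the same facility correspondence (each of the $\sum_{i}R_{i}$ resources becomes an FTFL facility inheriting its site's connection costs) and the same substitution $y_{i}^{d}=y_{k}$, $x_{ij}=\sum_{k\in i}x_{kj}$ for pulling an FTFL solution back to a GCFTRA solution of equal cost. You go somewhat beyond the paper's sketch by also writing out the GCFTRA-to-FTFL direction --- redistributing each $x_{ij}$ over a set $S_{ij}$ of open facilities in block $i$, which is exactly the ``details are omitted'' step needed to conclude the two ILPs are genuinely equivalent --- and by stating explicitly why the construction is only pseudo-polynomial ($\sum_{i}R_{i}$ may be exponential in the binary encoding of the $R_{i}$); both additions are accurate and strengthen the argument without changing its route.
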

\begin{proof}
Let $\left(y_{k,}\, x_{kj}\right)$ be any solution of ILP \eqref{eq:gcftra-ftfl},
and in ILP \eqref{eq:gcftra-ds} let $y_{i}^{d}=y_{k}$ if the facility
$i^{\left(d\right)}=k$ and $x_{ij}=\sum_{k\in i}x_{kj}$. Note that
through our transformation, an FTFL instance includes all facilities
of an GCFTRA instance. The condition $i^{\left(d\right)}=k$ denotes
the case of a pair of identical facilities and $k\in i$ denotes the
case if the $k$th facility of an FTFL instance belongs to the $i$th
site of the GCFTRA instance. We first substitute $\left(y_{k,}\,\sum_{k\in i}x_{kj}\right)$
into ILP \eqref{eq:gcftra-ds} and show it constitutes a feasible
solution. Next, it is easy to see the objective values of ILPs \eqref{eq:gcftra-ds}
and \eqref{eq:gcftra-ftfl} are equivalent after substitution. Details
are omitted.
\end{proof}
\bibliographystyle{plain} \bibliographystyle{plain}
\bibliography{FTRAs-Kewen-CoRR}

\begin{thebibliography}{10}

\bibitem{JaroslawFTFL1.725}
Jaroslaw Byrka, Aravind Srinivasan, and Chaitanya Swamy.
\newblock Fault-tolerant facility location: A randomized dependent lp-rounding
  algorithm.
\newblock In {\em IPCO}, pages 244--257, 2010.

\bibitem{chang2010optimal}
F.~Chang, J.~Ren, and R.~Viswanathan.
\newblock {Optimal Resource Allocation in Clouds}.
\newblock In {\em Cloud Computing (CLOUD), 2010 IEEE 3rd International
  Conference on}, pages 418--425. IEEE, 2010.

\bibitem{Charikar051.7281.853}
Moses Charikar and Sudipto Guha.
\newblock Improved combinatorial algorithms for facility location problems.
\newblock {\em SIAM J. Comput.}, 34(4):803--824, 2005.

\bibitem{charikar2011novel}
Moses Charikar and Shi Li.
\newblock {An novel LP-rounding approximation algorithm for the k-median
  problem and an improved approximation algorithm for the metric facility
  location problem}.
\newblock 2011.

\bibitem{chudak2005improved}
F.A. Chudak and D.P. Williamson.
\newblock {Improved approximation algorithms for capacitated facility location
  problems}.
\newblock {\em Mathematical programming}, 102(2):207--222, 2005.

\bibitem{Guha01FTFL2.47}
Sudipto Guha, Adam Meyerson, and Kamesh Munagala.
\newblock Improved algorithms for fault tolerant facility location.
\newblock In {\em SODA '01: Proceedings of the twelfth annual ACM-SIAM
  symposium on Discrete algorithms}, pages 636--641, Philadelphia, PA, USA,
  2001. Society for Industrial and Applied Mathematics.

\bibitem{Guha03FTFL2.41}
Sudipto Guha, Adam Meyerson, and Kamesh Munagala.
\newblock A constant factor approximation algorithm for the fault-tolerant
  facility location problem.
\newblock {\em J. Algorithms}, 48(2):429--440, 2003.

\bibitem{Jain03dualfitting}
Kamal Jain, Mohammad Mahdian, Evangelos Markakis, Amin Saberi, and Vijay~V.
  Vazirani.
\newblock Greedy facility location algorithms analyzed using dual fitting with
  factor-revealing {LP}.
\newblock {\em Journal of the ACM}, 50(6):795--824, 2003.

\bibitem{Jain02greedy}
Kamal Jain, Mohammad Mahdian, and Amin Saberi.
\newblock A new greedy approach for facility location problems.
\newblock In {\em STOC '02: Proceedings of the thiry-fourth annual ACM
  symposium on Theory of computing}, pages 731--740, New York, NY, USA, 2002.
  ACM.

\bibitem{Jain00FTFL}
Kamal Jain and Vijay~V. Vazirani.
\newblock An approximation algorithm for the fault tolerant metric facility
  location problem.
\newblock In {\em APPROX '00: Proceedings of the Third International Workshop
  on Approximation Algorithms for Combinatorial Optimization}, pages 177--183,
  London, UK, 2000. Springer-Verlag.

\bibitem{jain01approximation}
Kamal Jain and Vijay~V. Vazirani.
\newblock Approximation algorithms for metric facility location and $k$-median
  problems using the primal-dual schema and {Lagrangian} relaxation.
\newblock {\em Journal of the ACM}, 48(2):274--296, 2001.

\bibitem{Lin92filting}
Jyh-Han Lin and Jeffrey~Scott Vitter.
\newblock $e$-approximations with minimum packing constraint violation.
\newblock In {\em STOC '92: Proceedings of the twenty-fourth annual ACM
  symposium on Theory of computing}, pages 771--782, New York, NY, USA, 1992.
  ACM.

\bibitem{Mohammad1.861}
Mohammad Mahdian, Evangelos Markakis, Amin Saberi, and Vijay Vazirani.
\newblock A greedy facility location algorithm analyzed using dual fitting.
\newblock In {\em APPROX '01/RANDOM '01: Proceedings of the 4th International
  Workshop on Approximation Algorithms for Combinatorial Optimization Problems
  and 5th International Workshop on Randomization and Approximation Techniques
  in Computer Science}, pages 127--137, London, UK, 2001. Springer-Verlag.

\bibitem{Mahdian021.52}
Mohammad Mahdian, Yinyu Ye, and Jiawei Zhang.
\newblock Improved approximation algorithms for metric facility location
  problems.
\newblock In {\em APPROX '02: Proceedings of the 5th International Workshop on
  Approximation Algorithms for Combinatorial Optimization}, pages 229--242,
  London, UK, 2002. Springer-Verlag.

\bibitem{Mohammad06FLP}
Mohammad Mahdian, Yinyu Ye, and Jiawei Zhang.
\newblock Approximation algorithms for metric facility location problems.
\newblock {\em SIAM J. Comput.}, 36(2):411--432, 2006.

\bibitem{Guha99greedy}
Guha S. and Khuller S.
\newblock Greedy strikes back: Improved facility location algorithms.
\newblock {\em Journal of Algorithms}, 31:228--248(21), April 1999.

\bibitem{Shmoys97FL}
David~B. Shmoys, Eva Tardos, and Karen Aardal.
\newblock Approximation algorithms for facility location problems.
\newblock In {\em Proceedings of the 29th Annual ACM Symposium on Theory of
  Computing}, pages 265--274, 1997.

\bibitem{Swamy08FTFL2.076}
Chaitanya Swamy and David~B. Shmoys.
\newblock Fault-tolerant facility location.
\newblock {\em ACM Trans. Algorithms}, 4(4):1--27, 2008.

\bibitem{shihongftfa}
Shihong Xu and Hong Shen.
\newblock The fault-tolerant facility allocation problem.
\newblock In {\em Proceedings of the 20th International Symposium on Algorithms
  and Computation}, ISAAC '09, pages 689--698, Berlin, Heidelberg, 2009.
  Springer-Verlag.

\bibitem{yan2011approximation}
L.~Yan and M.~Chrobak.
\newblock {Approximation algorithms for the Fault-Tolerant Facility Placement
  problem}.
\newblock {\em Information Processing Letters}, 2011.

\end{thebibliography}
 
\end{document}